\newtheorem{statement}{Lemma}
\newtheorem{theorem}{Theorem}
\newtheorem{corollary}{Corollary}[theorem]
\newtheorem{example}{Example}
\def\span{\mathop{\rm span}}
\def\ord{\mathop{\rm ord}}
\def\wgt{\mathop{\rm wgt}}
\def\mod{\mathop{\rm mod}}
\def\clg{\mathop{\mathrm{Cl}_{\mathcal{G}}}}
\newcommand{\ket}[1]{\left\vert{#1}\right\rangle}
\def\Hy@safe@activestrue{}
\def\urlprefix#1#2{\hskip0pt plus0.01fil\discretionary{}{}{}%
  \hbox{\url{#2}}}
\begin{document}

\author{Alexey A. Kovalev}
\affiliation{Department of Physics \& Astronomy, University of
  California, Riverside, California 92521, USA}
\author{Ilya Dumer}
\affiliation{Department of Electrical Engineering, University of
  California, Riverside, California 92521, USA}
\author{Leonid P. Pryadko}
\affiliation{Department of Physics \& Astronomy, University of
  California, Riverside, California 92521, USA}

\title{Low-complexity quantum codes designed via codeword-stabilized framework}
\date\today
\begin{abstract}
  We consider design of the quantum stabilizer codes via a two-step,
  low-complexity approach based on the framework of
  codeword-stabilized (CWS) codes. In this framework, each quantum CWS
  code can be specified by a graph and a binary code.  For codes that
  can be obtained from a given graph, we give several upper bounds on
  the distance of a generic (additive or non-additive) CWS code, and
  the lower Gilbert-Varshamov bound for the existence of additive CWS
  codes.  We also consider additive cyclic CWS codes and show that
  these codes correspond to a previously unexplored class of
  single-generator cyclic stabilizer codes.  We present several families
  of simple stabilizer codes with relatively good parameters.

\end{abstract}
\maketitle

\section{Introduction}

It was the invention of quantum error correcting codes
\cite{shor-error-correct,Knill-Laflamme-1997,Bennett-1996} (QECCs)
that opened quantum computing (QC) as a theoretical possibility.
However, high precision required for error correction
\cite{Knill-error-bound,Rahn-2002,Dennis-Kitaev-Landahl-Preskill-2002,%
  Steane-2003,%
  Fowler-QEC-2004,Fowler-2005,fowler-thesis-2005,%
  Knill-nature-2005,Knill-2005,Raussendorf-Harrington-2007} combined
with the large number of auxiliary qubits necessary to implement
it, have so far inhibited any practical realization beyond
proof-of-the-principle
demonstrations\cite{chuang-2000,chuang-2001,Gulde-2003,%
  Chiaverini-2004,Friedenauer-2008,Martinis-review-2009,%
  Kim-2010}. 

In any QECC, one needs to perform certain many-body quantum
measurements in order to decide how to correct the encoded state.
The practical difficulty is that  a generic code requires
measurements which are both complicated and frequent at the same
time. It is therefore clear that a quantum computer can only be
build via a thorough optimization at every step of the design. In
particular, code optimization targets codes that combine good
parameters with fairly simple measurements.  It is also desirable
to parallelize these measurements  given a specific on-chip layout
of a QC architecture.

To date, the main focus of the QECC-research has been on finding good codes
with the traditional code parameters, which are the block length $n$, code
dimension $K$, and code distance $d$ (or code rate $R\equiv (\log_2K)/n$ and
the relative distance $\delta\equiv d/n$).  For stabilizer codes
\cite{gottesman-thesis,Calderbank-1997}, we also consider the number of
encoded qubits $k=\log_2K$.

A number of stabilizer codes\cite{Grassl:codetables} have been
designed that meet or nearly achieve the existing bounds on
distance $d$ for the given $k$ and $n$. Code parameters can be
further refined by going beyond the family of stabilizer codes.
One example is a recently introduced framework of
codeword-stabilized (CWS) quantum
codes\cite{Smolin-2007,Cross-CWS-2009,Chuang-CWS-2009,Chen-Zeng-Chuang-2008}.
 A qubit CWS code ${\cal Q}\equiv ({\cal G},{\cal C})$ (in standard form) is
determined by a graph ${\cal G}$ and a classical binary code ${\cal
  C}$.  CWS codes include all stabilizer codes as a subclass (the
corresponding binary code ${\cal C}$ must be linear), but also the
codes which have been proved to have parameters superior to those of
any stabilizer code\cite{Cross-CWS-2009,Yu-2008,Yu-Chen-Oh-2007,%
  Grassl-Roetteler-2008A,Grassl-Roetteler-2008B,Grassl-2009}.
Unfortunately, typical gains in code dimension $K$ correspond to a
fraction of a qubit.  Moreover, error-correcting algorithms known
for general non-additive CWS codes have exponential
complexity\cite{Li-Dumer-Pryadko-2010,Li-Dumer-Grassl-Pryadko-2010},
as opposed to polynomial complexity of the stabilizer codes.

Even for the relatively simple additive codes, their optimization is a very
difficult problem that has exponential complexity. This is one of the main
reasons as to why the two relatively simple code families are almost
exclusively used among stabilizer codes to estimate the threshold accuracy
required for scalable quantum computation: the concatenated
codes\cite{Knill-error-bound,%
  Rahn-2002,Steane-2003,%
  Fowler-QEC-2004,Fowler-2005,fowler-thesis-2005,%
  Knill-nature-2005,Knill-2005} and the surface
codes\cite{Dennis-Kitaev-Landahl-Preskill-2002,Raussendorf-Harrington-2007}
which originated from the toric codes\cite{kitaev-anyons}.  Both families have
very low code rates that scale as inverse powers of code distance.

In this work we explore how the framework of CWS codes can be used 
to relegate the design of quantum stabilizer codes to classical
binary linear codes in order to simplify the overall design. In
particular, we formulate several theorems framing the parameters
of an additive CWS code which can be obtained from a given graph.
We also suggest a simple decomposition of the $\mathbb{F}_{4}$
generator matrix corresponding to the stabilizer in terms of  the
graph adjacency matrix and the parity check matrix of the binary
code. Finally, we design several graph families corresponding to
regular lattices which result in some particularly good codes.
These include graphs with circulant adjacency matrix which can be
used to construct single-generator cyclic additive codes, a class of
codes overlooked in previous publications. In particular, we prove
the existence of single-generator cyclic additive codes with the
parameters $[[km,k,m]]$, $k>10$ and $[[t^2+(t+1)^2,1,2t+1]]$
(version of toric codes).
Note that these code families have distances that are not bounded, unlike any
CWS code families constructed previously\cite{Cross-CWS-2009,Looi-2008}.

The paper is organized as follows. In Sec.~\ref{sec:notations}, we
introduce the notations and briefly review some known results for
quantum and classical codes. In Sec.~\ref{sec:upper-bound}, we
establish several upper bounds on general CWS codes.  In
Sec.~\ref{sec:additive-CWS} we give a CWS decomposition of the
$\mathbb{F}_4$ matrix corresponding to the stabilizer generators.  In
Sec.~\ref{sec:gv}, we formulate the Gilbert-Varshamov (GV) bounds for
additive CWS codes which can be obtained from a given graph.  Cyclic
additive CWS and more general single-generator additive cyclic codes are
considered in Sec.~\ref{sec:cyclic-CWS} where we discuss their
properties and give several examples. We give our conclusions in
Sec.~\ref{sec:conclusions}.

\section{Notations and some known results}
\label{sec:notations}

\subsection{Classical and quantum error correcting codes}
\label{sec-intro-general-qecc} A \emph{classical} $q$-ary block
error-correcting code $(n,K,d)_q$ is a set of $K$ length-$n$
strings over an alphabet with $q$ symbols. Different strings
represent $K$ distinct messages which can be transmitted.  The
(Hamming) distance between two strings is the number of positions
where they differ.  Distance $d$ of the code ${\cal C}$ is the
minimum distance between any two different strings from ${\cal
C}$.

In the case of \emph{linear} codes, the elements of the alphabet
must form a Galois field $\mathbb{F}_q$; all strings form
$n$-dimensional vector space $\mathbb{F}_q^n$.  A linear
error-correcting code $[n,k,d]_q$ is a $k$-dimensional subspace of
$\mathbb{F}_q^n$.  The distance of a linear code is just the
minimum weight of a non-zero vector in the code, where weight
$\wgt(\mathbf{c})$ of a vector $\mathbf{c}$ is the number of
non-zero elements.  A basis of the code is formed by the rows of its
\emph{generator matrix} $G$.  All vectors that are orthogonal to
the code form the corresponding $(n-k)$-dimensional dual code, its
generator matrix is the parity-check matrix $H$ of the original
code.

For a \emph{binary} code ${\cal C}[n,k,d]$, the field is just
$\mathbb{F}_2=\{0,1\}$.  For a \emph{quaternary} code $C$, the
field is $\mathbb{F}_{4}=\{0,1,\omega,\overline{\omega}\}$, with
\begin{equation}
  \label{eq:F4def}
\omega^{2}=\omega+1,\quad \omega^{3}=1,\;\, \mathrm{and}\;\,
\overline{\omega}\equiv \omega^{2}.
\end{equation}

For non-binary codes, there is also a distinct class of
\emph{additive} classical codes, defined as subsets of
$\mathbb{F}_q^n$ closed under addition (in the binary case these
are just linear codes).  A code ${\cal C}$ is cyclic if inclusion
$(c_{0},c_{1},\ldots ,c_{n-1})\in{\cal C}$ implies that
$(c_{n-1},c_{0},c_{1},\ldots ,c_{n-2})\in{\cal C}$.  Codes that
are both linear and cyclic are particularly simple: by mapping
vectors to polynomials in the natural way, $\mathbf{c}\to
c(x)\equiv c_0+c_1 x+\ldots+c_{n-1}x^{n-1}$, it is possible to
show that any such code consists of polynomials which are
multiples of a single generator polynomial $g(x)$, which must
divide $x^{n}-1$ (using the algebra corresponding to the field
$\mathbb{F}_q$).  The quotient defines the \emph{check polynomial}
$h(x)$,
\begin{equation}
  h(x) g(x)=x^n-1\label{eq:linear-cyclic-condition}
\end{equation}
which  is the generator polynomial of the dual code.  The degree
of the generator polynomial is $\deg g(x)=n-k$.  The corresponding
generator matrix $G$ can be chosen as (the first $k$ rows of) the
circulant matrix formed by subsequent shifts of the vector that
corresponds to $g(x)$.

\emph{Qubit} quantum error correcting codes are defined in the complex
Hilbert space $\mathcal{H}_{2}^{\otimes n}$, where $\mathcal{H}_{2}$
is the Hilbert space of a single two-level system.  $\mathcal{H}_{2}$
is formed by all vectors $\alpha\left|0\right\rangle
+\beta\left|1\right\rangle $ with $\alpha,\beta\in\mathbb{C}$, and the
inner product such that the two states are orthonormal, $\langle
  i| j\rangle=\delta_{ij}$, $i,j\in \{0,1\}$.
Any operator
acting in $\mathcal{H}_{2}^{\otimes n}$ can be represented as a linear
combination of Pauli operators which form the $n$-qubit Pauli group
$\mathscr{P}_{n}$ of size $2^{2n+2}$,
\begin{equation}
  \mathscr{P}_{n}=i^{m}\{I,X,Y,Z\}^{\otimes n},\; m=0,\ldots ,3\:,
  \label{PauliGroup}
\end{equation}
where $X$, $Y$, and $Z$ are the usual Pauli matrices, and $I$ is the
identity matrix.  The weight $\wgt(E)$ of a Pauli operator $E$ is the
number of non-identity terms in the corresponding tensor product.

All Pauli operators are unitary; they are also Hermitian with
eigenvalues $\pm 1$ when the phase factor $i^m$ in
Eq.~(\ref{PauliGroup}) is real-valued, $m=0,2$.  A state
$\left|\psi\right\rangle $ is stabilized by a Hermitian Pauli operator $M$ if
$M\left|\psi\right\rangle =\left|\psi\right\rangle $. A linear space
$Q$ is stabilized by a set of operators $\mathcal{M}$ if each vector
in $Q$ is stabilized by every operator in $\mathcal{M}$.

An $((n,K,d))$ \emph{quantum error-correcting code} is a
$K$-dimensional subspace of the Hilbert space
$\mathcal{H}_{2}^{\otimes n}$. Such a subspace can be described by an
orthonormal basis $\{\left|i\right\rangle \}_{i=1}^{K}$. Let
$\mathcal{E}\subset\mathscr{P}_{n}$ be some set of Pauli errors.  A
QECC detects all errors $E\in\mathcal{E}$ if and only if
\cite{Nielsen-book,gottesman-thesis}
\begin{equation}
\left\langle j\left|E\right|i\right\rangle =C_{E}\delta_{ij}\:,
\end{equation}
where $C_{E}$ only depends on the error $E$, but is independent of the
basis vectors.  A QECC has distance $d$ if it can detect all Pauli
errors of weight $(d-1)$, but not all errors of weight $d$.  The
errors in the set $\mathcal{E}$ can be corrected if and only if all
the nontrivial pairwise combinations of errors from $\mathcal{E}$ are
detectable \cite{Bennett-1996,Knill-Laflamme-1997}. Thus a
distance-$d$ code corrects all errors of weight $s\leq
t\equiv\left\lfloor (d-1)/2\right\rfloor $.

The code ${\cal Q}$ is \emph{non-degenerate} if linearly-independent
errors from ${\cal E}$ produce corrupted spaces $E({\cal Q})\equiv
\{E\ket \psi:\ket\psi\in{\cal Q}\}$ that are linearly independent.
Otherwise, the code is \emph{degenerate}, implying the existence of at
least two \emph{mutually degenerate} linearly independent operators
$\{E_1, E_2\}\in {\cal E}$ which act identically on ${\cal Q}$.

The code is called \emph{pure} if linearly independent errors from
${\cal E}$ produce corrupted spaces that are not only linearly
independent, but also mutually orthogonal.  For all codes considered
in this work, non-degenerate codes are also
pure\cite{Calderbank-1997,Li-Dumer-Grassl-Pryadko-2010}.

Two codes are considered equivalent if they differ just by qubit
order, and/or discrete rotations leaving each of the single-qubit
Pauli groups invariant.  The latter are called local Clifford (LC)
transformations.

\subsection{Stabilizer quantum error correcting codes}
\label{sec:intro-stab-codes}

Here we briefly review the well-known family of stabilizer codes
\cite{gottesman-thesis}.  An $[[n,k,d]]$ \emph{stabilizer code} $Q$ is
a $2^{k}$-dimensional subspace of the Hilbert space
$\mathcal{H}_{2}^{\otimes n}$ stabilized by an Abelian group
$\mathscr{S}\subset\mathscr{P}_{n}$ with $n-k$ Hermitian Pauli generators,
$\mathscr{S}=\left\langle G_{1},\ldots ,G_{n-k}\right\rangle $.
Explicitly,
\begin{equation}\label{eq:stabilizer}
  Q\equiv\{\left|\psi\right\rangle :
  S\left|\psi\right\rangle =\left|\psi\right\rangle ,
  \forall S\in\mathscr{S}\}\:.
\end{equation}
Such a code exists only if $-\openone\notin\mathscr{S}$.  The group
$\mathscr{S}$ is called the stabilizer of the code.  Changing the
sign(s) of one or several of the generators $G_i$ results in replacing
$Q$ with one of $2^{n-k}-1$ equivalent codes whose direct sum (together
with $Q$) is the entire space $\mathscr{P}_n$.

The \emph{normalizer} of $\mathscr{S}$ is a set of Pauli
operators generating unitary transformations that leave $\mathscr{S}$
invariant,
\begin{equation}
  \mathscr{N}\equiv\{U\in\mathscr{P}_{n}:
  U^{\dagger}SU=S,\forall S\in\mathscr{S}\}\:.
\end{equation}
Elements of the normalizer form a group commuting with $\mathscr{S}$
but not necessarily with each other.  It is possible to construct $2k$
logical operators $\overline{X}_{j}$, $\overline{Z}_{j}$, $j=1,\ldots
,k$ belonging to $\mathscr{P}_{n}$ with the usual commutation
relations that generate the normalizer when the generators of
$\mathscr{S}$ are included \cite{gottesman-thesis,Wilde-2009}.  The
Abelian subgroup of $\mathscr{N}$, $\mathscr{S}_{0}=\left\langle
  G_{1},\ldots ,G_{n-k},\overline{Z}_{1},\ldots
  ,\overline{Z}_{k}\right\rangle $, becomes a maximal Abelian subgroup
of $\mathscr{P}_{n}$ when the generator $i\openone$ is also
included.

The group $\mathscr{S}_{0}$ stabilizes a unique \emph{stabilizer}
state $\left|s\right\rangle \equiv\overline{\left|0\ldots
    0\right\rangle }$, an $[[n,0,d']]$ stabilizer code, while the
operators $\overline{X}_{j}$ generate the basis of the code, i.e.,
\begin{equation}
  \overline{\left|c_{1}\ldots c_{k}\right\rangle}=\overline{X}_{1}^{c_{1}}\ldots
  \overline{X}_{k}^{c_{k}}\left|s\right\rangle .
  \label{eq:code-basis}
\end{equation}
By convention, the
stabilizer state is considered non-degenerate, and its distance $d'$
is the minimum weight of a non-trivial member of the group $\mathscr{S}_0$.

For stabilizer codes, phases of (Hermitian) Pauli operators are only
needed to choose one of the equivalent codes in
Eq.~(\ref{eq:stabilizer}), as well as to introduce the commutation
relations.  It is convenient to drop the phases and map the
Pauli operators to two binary strings,
$\mathbf{v},\mathbf{u}\in\{0,1\}^{n}$ \cite{Calderbank-1997},
\begin{equation}
  U\equiv i^{m'}X^{\mathbf{v}} Z^{\mathbf{u}}\rightarrow(\mathbf{v},\mathbf{u}),
  \label{eq:Umap}
\end{equation}
where
$X^{\mathbf{v}}=X_{1}^{v_{1}}X_{2}^{v_{2}}\ldots X_{n}^{v_{n}}$,
$Z^{\mathbf{u}}=Z_{1}^{u_{1}}Z_{2}^{u_{2}}\ldots Z_{n}^{u_{n}}$, and
$m'=0,\ldots,3$ is generally different from that in
Eq.~(\ref{PauliGroup}).  This map preserves the operator algebra, with
a product of two Pauli operators $U_1$ and $U_2$ corresponding to a
sum of the corresponding binary vectors $(\mathbf{v}_1,\mathbf{u}_1)$ and
$(\mathbf{v}_2,\mathbf{u}_2)$.

The map~(\ref{eq:Umap}) can be taken one step
further\cite{Calderbank-1997} to quaternary codes, by introducing
$\mathbb{F}_4^n$ vectors $\mathbf{e}\equiv \mathbf{u}+\omega
\mathbf{v}$ [see Eq.~(\ref{eq:F4def}); note that this mapping differs
  slightly from that in Ref.~\onlinecite{Calderbank-1997}].  We will
denote this combined map as a function $\mathbf{e}\equiv \phi(U)$.
Note that up to a phase this association also allows us to define
$\phi^{-1}(\mathbf{e})$.  To be specific, for the Pauli operator
$\phi^{-1}(\mathbf{e})$ we will set $m'=\mathbf{v}\cdot \mathbf{u}$ in
Eq.~(\ref{eq:Umap}), which corresponds to $m=0$ in
Eq.~(\ref{PauliGroup}).

It is easy to check that two Pauli operators commute if and only if
the symplectic scalar product $\mathbf{v}_1\cdot
\mathbf{u}_2+\mathbf{u}_1\cdot \mathbf{v}_2$ vanishes ($\mod 2$).  In
terms of the corresponding $\{\mathbf{e}_1, \mathbf{e}_2\}\subset \mathbb{F}_4$,
this corresponds to the vanishing of the \emph{trace inner product}
\begin{equation}
  \mathbf{e}_{1}*\mathbf{e}_{2}\equiv \mathbf{e}_1\cdot \overline \mathbf{e}_2
  +\overline \mathbf{e}_1\cdot \mathbf{e}_2,\label{InnerProduct}
\end{equation}
where $\overline\mathbf{e}_i\equiv
\mathbf{u}_i+\overline\omega\mathbf{v}_i$, $i=0,1$.

A dual code to an additive $\mathbb{F}_{4}$ code $C$ (equipped with
trace inner product) is defined as \cite{Calderbank-1997}
\begin{equation}
  \label{eq:C-perp}
C_{\perp}=\{\mathbf{e}'\in\mathbb{F}_{4}^{n}:\mathbf{e}'*\mathbf{e}=0,\:
\mbox{for}\:\mbox{all}\:\mathbf{e}\in C\}\,.
\end{equation}
If $C\subseteq C_{\perp}$, one says $C$ is self-orthogonal. A
classical additive code $C$ corresponding to a set of operators
$\mathscr{S}_{1}$ is self-orthogonal if and only if $\mathscr{S}_{1}$
is an Abelian group.  Thus any quantum stabilizer code can be
described as a self-orthogonal classical additive code over
$\mathbb{F}_{4}$.  The following theorem is applicable to additive
$\mathbb{F}_{4}$ codes (variant of Theorem 2 from
Ref.~\cite{Calderbank-1997}):

\begin{theorem} \label{Th1}Suppose $C$ is an additive self-orthogonal
code in $\mathbb{F}_{4}^{n}$, containing $2^{n-k}$ vectors, such
that there are no vectors of weight $<d$ in $C_{\perp}\setminus C$.
Then $\phi^{-1}(C)$ defines a stabilizer of an additive QECC with
parameters $[[n,k,d]]$. \end{theorem}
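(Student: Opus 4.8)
The plan is to read off the three parameters $n$, $k$, $d$ in turn from the dictionary between additive $\mathbb{F}_4$ codes and Pauli operators assembled above; the block length is immediate, since $\phi^{-1}$ carries a length-$n$ vector over $\mathbb{F}_4$ to an operator in $\mathscr{P}_n$. To fix the stabilizer, note that since $C$ is additive with $|C|=2^{n-k}$, viewing $\mathbb{F}_4^n$ as a $2n$-dimensional space over $\mathbb{F}_2$ it is an $\mathbb{F}_2$-subspace of dimension $n-k$; I would pick an independent generating set $\mathbf{e}_1,\ldots,\mathbf{e}_{n-k}$ and set $G_i=\phi^{-1}(\mathbf{e}_i)$ using the Hermitian phase convention $m'=\mathbf{v}\cdot\mathbf{u}$. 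By the equivalence already recorded in the text --- vanishing of the trace inner product $\ast$ is the same as commutation of the corresponding Pauli operators --- self-orthogonality of $C$ makes $\mathscr{S}\equiv\langle G_1,\ldots,G_{n-k}\rangle$ Abelian. I would then verify $-\openone\notin\mathscr{S}$: because the $\mathbf{e}_i$ are $\mathbb{F}_2$-independent, no nontrivial product of the $G_i$ maps to the zero vector, whereas $-\openone$ does, so the only scalar in $\mathscr{S}$ is $\openone$ itself. Thus $\mathscr{S}$ is a legitimate stabilizer with $n-k$ independent generators, and the standard dimension count for the space $Q$ in Eq.~(\ref{eq:stabilizer}) gives $\dim Q=2^{n-(n-k)}=2^k$, so the code encodes $k$ qubits.

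The substance of the proof is the distance. I would classify every Pauli error $E$ by how it sits relative to $\mathscr{S}$ and show that the detection condition $\langle j|E|i\rangle=C_E\delta_{ij}$ fails exactly when $\phi(E)\in C_\perp\setminus C$. If $E$ anticommutes with some $G_i$, it sends $Q$ into the $(-1)$-eigenspace of $G_i$, orthogonal to $Q$, so every matrix element vanishes and $E$ is detected. If instead $E$ commutes with all of $\mathscr{S}$ --- equivalently, via the same commutation/trace-inner-product dictionary, $\phi(E)\in C_\perp$ --- then either $\phi(E)\in C$, in which case $E$ is a phase times an element of $\mathscr{S}$ and acts on $Q$ as a scalar, satisfying the detection condition trivially, or $\phi(E)\in C_\perp\setminus C$, in which case $E$ is a nontrivial normalizer element mapping $Q$ to itself without being proportional to the identity, hence undetectable. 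Since $\wgt(E)$ equals the $\mathbb{F}_4$-weight of $\phi(E)$, the hypothesis that $C_\perp\setminus C$ has no vector of weight below $d$ says precisely that every error of weight $\le d-1$ is detectable; taking $d$ to be the smallest weight actually attained in $C_\perp\setminus C$ yields an undetectable error of weight $d$, so the distance is exactly $d$.

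I expect the main obstacle to be the careful handling of degeneracy in the distance step. Low-weight operators may well lie in $C$ itself (equivalently, in $\mathscr{S}$), yet these are harmless because they act as scalars on $Q$ and never spoil the detection condition; this is precisely why the weight hypothesis is imposed on $C_\perp\setminus C$ rather than on all of $C_\perp$. Making this distinction airtight, together with the identification of the normalizer $\mathscr{N}$ with $C_\perp$ through the trace inner product, is where the real content lies; the remaining bookkeeping follows the standard correspondence of Theorem~2 of Ref.~\cite{Calderbank-1997}.
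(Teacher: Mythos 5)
Your proof is correct, and it is the standard Calderbank--Rains--Shor--Sloane argument that the paper itself does not reproduce: Theorem~\ref{Th1} is stated there as an imported ``variant of Theorem 2 of Ref.~\cite{Calderbank-1997}'' with no internal proof. All the key steps check out --- commutativity of $\mathscr{S}$ from self-orthogonality via the trace inner product, $-\openone\notin\mathscr{S}$ from $\mathbb{F}_2$-independence of the generators, the dimension count, and the three-way classification of errors by whether $\phi(E)$ anticommutes with the stabilizer, lies in $C$, or lies in $C_\perp\setminus C$ --- so the proposal faithfully reconstructs the proof the paper delegates to its reference.
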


\begin{example}\label{Ex:713} The well-known Calderbank-Shor-Steane (CSS)
  $[[7,1,3]]$ code \cite{Calderbank-Shor-1996,Steane-1996} has the
  stabilizer with the generators\cite{gottesman-thesis}
  \begin{eqnarray}
    \label{eq:CSS-Hamming}
    XXXXIII,\;&XXIIXXI,&XIXIXIX,\nonumber\\
    ZZZZIII,\quad&ZZIIZZI,&ZIZIZIZ,
  \end{eqnarray}
and the logical operators
\begin{equation}
\overline{X}=ZZZZZZZ,\:\overline{Z}=XXXXXXX.
\end{equation}
As any CSS code, this code is linear.  Qubit permutations also give an
equivalent cyclic linear code with the generator polynomial
$g(x)=1+x+x^2+x^4$; $g(x)$ is a factor of $x^7-1$.  The corresponding
check polynomial is $h(x)=1+x+x^3$.
 \end{example}

\subsection{Codeword stabilized codes}

CWS codes \cite{Cross-CWS-2009} represent a general class of
nonadditive QECCs.  A general CWS code is defined in terms of a
stabilizer state $\ket s$ and a set of $K$ mutually commuting
\emph{codeword operators} ${\cal W}\equiv
\{W_i\}_{i=1}^K\subset\mathscr{P}_n$.  Explicitly
[cf.~Eq.~(\ref{eq:code-basis})],
\begin{equation}
  \label{eq:CWS-general}
  {\cal Q}=\span(\{W_i\ket{s}\}_{i=1}^K).
\end{equation}
For non-trivial CWS codes, this construction coincides with
union-stabilizer (USt) codes\cite{Grassl-1997}, restricted to the
zero-dimensional originating code.

Any stabilizer state is LC-equivalent 
to a \emph{graph state} \cite{Grassl-Klappenecker-Roetteler-2002,%
  Schlingemann-2002,VandenNest-2004,Hein-2006}, a stabilizer state
with the stabilizer group $\mathscr{S}_{{\cal G}}\equiv\langle
S_{1},\ldots,S_{n}\rangle$ whose generators $S_i$ are determined by
the adjacency matrix $R\in\{0,1\}^{n\times n}$ of a (simple) graph
${\cal G}$,
\begin{equation}
  \label{eq:graph-generators}
S_{i}=X_{i}Z^{\mathbf{r}_{i}}\,,
\end{equation}
where $\mathbf{r}_{i}$, $i=1,\ldots ,n$ denotes the $i$-th row of $R$.
In fact, such a graph is usually not unique, even after accounting for
graph isomorphisms.  The full set of LC-equivalent graph states can be
generated by a sequence of \emph{local complementations}, operations
on a graph where the subgraph corresponding to a neighborhood of a
particular vertex is inverted.  Such graphs are called \emph{locally
  equivalent}\cite{Bouchet-1993}.

Any CWS code $((n,K,d))$ is LC equivalent to a CWS code in
\emph{standard form}, defined by an order-$n$ graph ${\cal G}$ and a
classical binary code ${\cal C}$ containing $K$ binary words.  The
graph defines the graph state, while the vectors of the classical code
$\mathbf{c}_i\in {\cal C}$ are used to generate the code word
operators, $W_i=Z^{\mathbf{c}_i}$.  Thus,
\begin{equation}
  \label{eq:CWS-basis}
  {\cal Q}=\span(\{ Z^{\mathbf{c}_i}\ket s\}_{ i=1}^K).
\end{equation}
It is customary to use notation $Q=({\cal G},{\cal C})$ for CWS codes
in standard form.

The key simplification of the CWS construction comes from the fact
that the basis states $W_i\ket s$ are eigenvectors of the graph
stabilizer generators,
\begin{equation}
\label{eq:eigen}
S_i W_i\ket s =\pm W_i \ket s,\quad S_i\in\mathscr{S}_{\cal G}.
\end{equation}
Thus, a Pauli operator in the form~(\ref{eq:Umap}) can be transformed
to a $Z$-only operator $Z^{\clg(U)}$, where the \emph{graph image} of
the operator $U$ is the binary vector
\begin{equation}
  \mbox{Cl}_{{\cal G}}(U)\equiv \mathbf{u}+\sum_{i=1}^{n}\mathbf{v}_{i}
  \mathbf{r}_{i} \,(\mod 2).\label{ClMapping}
\end{equation}

The error correcting properties of a
quantum CWS code $Q=({\cal G},{\cal C})$ and the classical code ${\cal
  C}$ are related by the following

\begin{theorem}\label{Th2} (after Theorem 3 from Ref.~\cite{Cross-CWS-2009})
  Consider a CWS code ${\cal Q}=({\cal G},{\cal C})$.  An error $E$
  such that $\clg(E)\neq\mathbf{0}$, is detectable in ${\cal Q}$ if
  and only if the binary vector $\clg(E)$ is detectable within the
  code ${\cal C}$. An error $E$ such that $\clg(E)=\mathbf{0}$ is
  detectable in ${\cal Q}$ if and only if
  $Z^{\mathbf{c}}E=EZ^{\mathbf{c}}$ for all $\mathbf{c}\in{\cal
    C}$. \end{theorem}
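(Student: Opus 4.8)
The plan is to reduce each detectability condition for the quantum code to a statement about the classical binary code ${\cal C}$ by exploiting the eigenvector structure of the basis states $W_i\ket s=Z^{\mathbf{c}_i}\ket s$ recorded in Eq.~(\ref{eq:eigen}). The central computational device is the graph image map $\clg$ of Eq.~(\ref{ClMapping}): for any $U\equiv i^{m'}X^{\mathbf{v}}Z^{\mathbf{u}}$ I would first establish the key commutation-reduction identity
\begin{equation}
  \label{eq:reduction}
  U\,\ket s = \pm\, Z^{\clg(U)}\ket s,
\end{equation}
which says that, acting on the graph state, any Pauli operator behaves as a $Z$-only operator labeled by its graph image, up to a sign. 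This follows because each graph generator $S_i=X_iZ^{\mathbf{r}_i}$ fixes $\ket s$, so every $X_i$ appearing in $U$ can be absorbed by multiplying with $S_i$ at the cost of producing the $Z^{\mathbf{r}_i}$ factor and a sign; summing the contributions $\mathbf{v}_i\mathbf{r}_i$ over all $i$ reproduces exactly $\clg(U)=\mathbf{u}+\sum_i\mathbf{v}_i\mathbf{r}_i$. I expect this step, together with carefully tracking the $\pm$ sign through the (possibly non-commuting) reordering of the $X_i$'s and the phase conventions fixed after Eq.~(\ref{eq:Umap}), to be the main obstacle, since the sign depends on the error $E$ but, crucially, must turn out to be independent of which basis word $\mathbf{c}_i$ the operator acts on.

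With Eq.~(\ref{eq:reduction}) in hand, I would compute the general detection matrix element $\bra{W_j}E\ket{W_i}=\bra s Z^{\mathbf{c}_j}E\,Z^{\mathbf{c}_i}\ket s$. Commuting $Z^{\mathbf{c}_i}$ through $E$ and applying the reduction identity collapses the whole expression to an overlap of the form $\pm\bra s Z^{\clg(E)+\mathbf{c}_i+\mathbf{c}_j}\ket s$, with a sign $(-1)^{f_E(\mathbf{c}_i)}$ carrying the commutator phase between $E$ and $Z^{\mathbf{c}_i}$. Because distinct $Z$-strings map the non-degenerate stabilizer state $\ket s$ to orthogonal states, the overlap $\bra s Z^{\mathbf{w}}\ket s$ is nonzero if and only if $\mathbf{w}=\mathbf{0}$. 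The QECC detection condition $\bra{W_j}E\ket{W_i}=C_E\delta_{ij}$ therefore becomes a purely classical condition on the binary vectors $\clg(E)$ and $\{\mathbf{c}_i\}$.

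Finally I would split into the two announced cases. When $\clg(E)\neq\mathbf{0}$, the off-diagonal element vanishes automatically unless $\mathbf{c}_i+\mathbf{c}_j=\clg(E)$, so requiring $C_E\delta_{ij}$ is exactly the statement that no pair of distinct codewords differs by $\clg(E)$, i.e.\ that $\clg(E)$ is a detectable error for the classical code ${\cal C}$; the diagonal elements cannot contribute since $\clg(E)\neq\mathbf 0$ forces $\mathbf c_i+\mathbf c_i=\mathbf 0\neq\clg(E)$. When $\clg(E)=\mathbf{0}$, the overlap survives only on the diagonal $i=j$, the matrix element reduces to $\pm C$ with the sign $(-1)^{f_E(\mathbf{c}_i)}$, and the requirement that this sign be the same constant $C_E$ for every $i$ is precisely the demand that $E$ commute with $Z^{\mathbf{c}}$ for all $\mathbf{c}\in{\cal C}$. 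This matches the two clauses of the theorem statement and completes the proof.
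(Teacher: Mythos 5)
Your proposal is correct and is essentially the standard argument: the paper itself states Theorem \ref{Th2} without proof, citing Theorem 3 of Ref.~\cite{Cross-CWS-2009}, and your reduction $E\ket{s}=\pm Z^{\clg(E)}\ket{s}$ followed by evaluation of $\bra{s}Z^{\mathbf{c}_j}EZ^{\mathbf{c}_i}\ket{s}$ against the orthogonality $\bra{s}Z^{\mathbf{w}}\ket{s}=\delta_{\mathbf{w},\mathbf{0}}$ is exactly the proof given there. The one convention to make explicit is that $\mathbf{0}\in{\cal C}$ (as the paper assumes can always be arranged), so that ``the sign $(-1)^{f_E(\mathbf{c}_i)}$ is constant over $i$'' is genuinely equivalent to $E$ commuting with every $Z^{\mathbf{c}}$ rather than merely anticommuting with all of them.
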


The case $\clg(E)\neq \mathbf{0}$ corresponds to pure (non-degenerate)
errors, while $\clg(E)= \mathbf{0}$ indicates that the error is in the
graph stabilizer group $\mathscr{S}_{\cal G}$; the corresponding
detectability condition is a requirement that the error must be
degenerate.

While in general CWS codes are non-additive, they include all
stabilizer codes as a subclass.  A CWS code ${\cal Q}=({\cal G},{\cal
  C})$ is additive if ${\cal C}$ is a linear
code\cite{Cross-CWS-2009}.  The stabilizer $\mathscr{S}$ of an
additive CWS code in standard form is a subgroup of the graph
stabilizer $\mathscr{S}_{\cal G}$; it can be obtained from the
graph-stabilizer generators by a symplectic Gram-Schmidt
orthogonalization procedure\cite{Li-Dumer-Grassl-Pryadko-2010}.
Conversely, the representation~(\ref{eq:code-basis}) of an additive
code corresponds to a general CWS code; an LC transformation may be
needed to obtain the corresponding standard form, and one can always
find a standard form where ${\cal C}$ is
linear\cite{Chuang-CWS-2009}.  In the following we will always assume
such a representation.

\begin{figure}[htbp]
  \centering
  \includegraphics[scale=0.85]{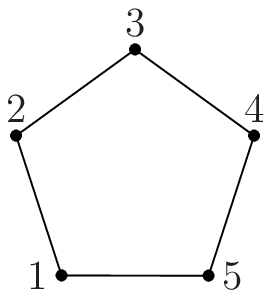}\hfill
  \includegraphics[scale=0.85]{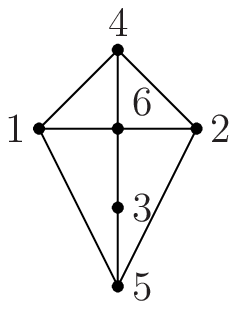}\hfill
  \includegraphics[scale=0.85]{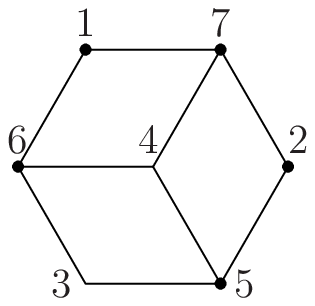}
  \caption{Left: $5$-ring graph corresponding to the $[[5,1,3]]$ code in
    Example Ex.~\ref{Ex:513}.  Center: ``Kite'' graph corresponding to
    the degenerate code
    $[[6,1,3]]$ from Example \ref{Ex:613}.  Right: The graph
    corresponding to the cyclic $[[7,1,3]]$ code from
    Example~\ref{Ex:713cws}.}
  \label{fig:ex613}
\end{figure}

\begin{example}
  \label{Ex:513}
  The smallest single-error-correcting code is the linear cyclic
  $[[5,1,3]]$ code\cite{Bennett-1996,%
    Calderbank-Rains-Shor-Sloane-1997,Laflamme-1996} with the
  generator polynomial $g(x)=1+\omega x+\omega x^2+x^3$ which divides
  $x^5-1$.  This code is unique; its stabilizer generators can be
  obtained as cyclic permutations of a single operator, $XZZXI$, and
  the logical operators are
  \begin{equation}
    \overline{X}=ZZZZZ,\:\overline{Z}=XXXXX.
  \end{equation}
  The corresponding CWS code\cite{Cross-CWS-2009} can be generated
  from the $5$-ring graph in Fig.~\ref{fig:ex613} (left), and the
  binary code has a single generator $\mathbf{c}=(11111)$.  Note that
  both the graph and the binary code preserve the original cyclic
  symmetry.
\end{example}

\begin{example}\label{Ex:613} There exist only two inequivalent 
  single-correcting codes $[[6,1,3]]$; both are
  degenerate\cite{Calderbank-Shor-1996}.  One of the codes is obtained
  from the code in Example \ref{Ex:513} by adding a qubit; the graph
  of the corresponding CWS code can be chosen as a $5$-ring
  [Fig.~\ref{fig:ex613} left] and a disconnected vertex $i=6$; the
  binary code ${\cal C}$ is generated by $\mathbf{c}=(111110)$.  The
  degeneracy group is generated by $S_6=X_6$.  The stabilizer
  generators for the second code are listed in
  Ref.~\cite{Calderbank-Shor-1996}.  This code corresponds to the
  graph in Fig.~\ref{fig:ex613} (center), while the binary code is
  generated by $c_1=(011100)$.  While there are three bits which are
  not involved with the classical code, they cannot be dropped as they
  are part of the entangled state.  The degeneracy group is generated
  by $S_1S_2\equiv X_1X_2$ (the equivalence follows from the fact that
  the first two vertices of ${\cal G}$ share all of their neighbors).
\end{example}

\begin{example}\label{Ex:713cws} The linear cyclic
  $[[7,1,3]]$ CSS code from the Example \ref{Ex:713} is LC equivalent to a CWS
  code with the graph in Fig.~\ref{fig:ex613} (Right). The corresponding
  classical code is given by ${\cal C}=\{0000000,1110000\}$.  Note that
  neither the graph nor the binary code is explicitly symmetric with respect
  to cyclic permutations of the qubits.  Note also that an inequivalent CWS
  cyclic $[[7,1,3]]$ code exists; we constructed such a code among others in
  Example~\ref{ex:k-copies}, see Table~\ref{tab:quantum-repetition}.
 \end{example}

\section{Upper bounds for CWS codes}
\label{sec:upper-bound}

In this section we give upper bounds on general CWS codes in terms of
the properties of the corresponding graph ${\cal G}$ and the binary
code ${\cal C}$.

\begin{statement}
  \label{lemma:upper-binary}
  The distance $d$ of the CWS code ${\cal Q}=({\cal
    G},{\cal C})$ cannot exceed that of $\mathcal{C}$.
\end{statement}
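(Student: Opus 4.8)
The plan is to establish this upper bound by exhibiting a single explicit undetectable Pauli error whose weight equals the distance $d_{\mathcal C}$ of the binary code $\mathcal{C}$. Since the quantum distance is by definition the minimum weight of an undetectable error, producing one such error of weight $d_{\mathcal C}$ forces $d\le d_{\mathcal C}$. The whole argument rests on the $\clg(E)\neq\mathbf{0}$ branch of Theorem~\ref{Th2}, which ties detectability of a pure error in $\mathcal{Q}$ directly to classical detectability in $\mathcal{C}$.

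Concretely, I would first pick two distinct codewords $\mathbf{c}_i,\mathbf{c}_j\in\mathcal{C}$ realizing the minimum Hamming distance, so that $\mathbf{b}\equiv\mathbf{c}_i+\mathbf{c}_j$ (addition $\mod 2$) has $\wgt(\mathbf{b})=d_{\mathcal C}$. This $\mathbf{b}$ is undetectable in $\mathcal{C}$, since $\mathbf{c}_i+\mathbf{b}=\mathbf{c}_j$ is again a codeword, i.e. $\mathbf{b}$ confuses $\mathbf{c}_i$ with $\mathbf{c}_j$. I would then take the pure $Z$-type error $E\equiv Z^{\mathbf{b}}$, which has $\wgt(E)=\wgt(\mathbf{b})=d_{\mathcal C}$. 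For this operator the symplectic pair is $(\mathbf{v},\mathbf{u})=(\mathbf{0},\mathbf{b})$, so the graph adjacency rows drop out of Eq.~(\ref{ClMapping}) and the graph image reduces to $\clg(E)=\mathbf{b}\neq\mathbf{0}$.

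Applying Theorem~\ref{Th2} in the case $\clg(E)\neq\mathbf{0}$, the error $E$ is detectable in $\mathcal{Q}$ if and only if $\clg(E)=\mathbf{b}$ is detectable within $\mathcal{C}$; as $\mathbf{b}$ is not, neither is $E$. Thus $\mathcal{Q}$ admits an undetectable error of weight $d_{\mathcal C}$, giving $d\le d_{\mathcal C}$. I do not expect a substantive obstacle here, since only a single witnessing error is required for an upper bound rather than control over all errors; the only points needing care are (i) noting that the minimum-weight undetectable error of a (possibly nonlinear) binary code is exactly a difference $\mathbf{c}_i+\mathbf{c}_j$ of two nearest codewords, and (ii) confirming that the $Z$-only choice of $E$ collapses $\clg(E)$ onto $\mathbf{b}$.
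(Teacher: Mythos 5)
Your proposal is correct and follows essentially the same route as the paper: the paper's proof likewise observes that a $Z$-only error $E=i^mZ^{\mathbf u}$ has graph image $\clg(E)=\mathbf u$, so detectability of $E$ in $\mathcal Q$ forces detectability of $\mathbf u$ in $\mathcal C$, and hence an undetectable binary vector of weight $d_{\mathcal C}$ yields an undetectable Pauli error of the same weight. You merely make the witnessing error $Z^{\mathbf{c}_i+\mathbf{c}_j}$ explicit, which the paper leaves implicit.
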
 \begin{proof} Indeed, any ``classical'' error in the
  form $E=i^mZ^\mathbf{u}$ is mapped by Eq.~(\ref{ClMapping}) to the
  binary vector $\mathbf{u}$.  If $E$ is detectable by ${\cal Q}$,
  $\mathbf{u}$ should be detectable by ${\cal C}$.
\end{proof}

Lemma \ref{lemma:upper-binary} concerns with errors which are dealt
with by the binary code.  On the other hand, a CWS code is an
enlargement of the code formed by the graph state.  The following
observation has been made in Ref.~\cite{Grassl-2009}:
\begin{statement}
  \label{lemma:upper-pure}
  The distance $d$ of a nondegenerate CWS code $((n,K,d))$ is limited
  by the distance $d'({\cal G})$ of the graph stabilizer state, $d\le
  d'({\cal G})$.
\end{statement}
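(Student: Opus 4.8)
The plan is to produce a single explicit low-weight error of graph-stabilizer type whose detection would be incompatible with nondegeneracy, and then weigh that against the distance hypothesis. First I would take $S^{*}$ to be a nonidentity element of the graph stabilizer group $\mathscr{S}_{\cal G}$ of smallest weight, so that by definition $\wgt(S^{*})=d'({\cal G})$ (recall that for the $[[n,0,d']]$ graph state the distance is exactly the minimum weight of a nontrivial element of $\mathscr{S}_{\cal G}$). Since each generator $S_{i}=X_{i}Z^{\mathbf{r}_{i}}$ has $\clg(S_{i})=\mathbf{r}_{i}+\mathbf{r}_{i}=\mathbf{0}$, and the map of Eq.~(\ref{ClMapping}) sends products of Pauli operators to sums of their binary images, the entire group $\mathscr{S}_{\cal G}$ lies in the kernel of $\clg$; in particular $\clg(S^{*})=\mathbf{0}$. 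This places $S^{*}$ squarely in the second (degenerate) branch of Theorem~\ref{Th2}.

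Next I would argue by contradiction, assuming $d'({\cal G})<d$. Then $S^{*}$ is an error of weight at most $d-1$, so a distance-$d$ code must detect it. Because $\clg(S^{*})=\mathbf{0}$, Theorem~\ref{Th2} says this detectability is equivalent to $Z^{\mathbf{c}}S^{*}=S^{*}Z^{\mathbf{c}}$ for all $\mathbf{c}\in{\cal C}$, i.e.\ $S^{*}$ commutes with every codeword operator $W_{i}=Z^{\mathbf{c}_{i}}$. I would then turn this commutation into a statement about the action on the whole code: since $S^{*}\in\mathscr{S}_{\cal G}$ stabilizes the graph state, $S^{*}\ket{s}=\ket{s}$, and commuting $S^{*}$ past each $W_{i}$ gives $S^{*}W_{i}\ket{s}=W_{i}S^{*}\ket{s}=W_{i}\ket{s}$. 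Hence $S^{*}$ fixes every basis vector and acts as the identity on ${\cal Q}=\span(\{W_{i}\ket{s}\})$. But then $S^{*}$ and the identity $I$ are two linearly independent Pauli operators, each of weight $\le d-1$, that act identically on ${\cal Q}$ — precisely the defining feature of a \emph{degenerate} code. This contradicts nondegeneracy, so $d'({\cal G})<d$ is impossible and $d\le d'({\cal G})$ follows.

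The main obstacle is the bookkeeping in the middle step rather than any hard estimate: one must be careful that $\clg(S^{*})=\mathbf{0}$ really does certify membership in $\mathscr{S}_{\cal G}$ (so that $S^{*}\ket{s}=\ket{s}$ may be invoked), and that the correct reading of ``detectable but $\clg=\mathbf 0$'' is degeneracy, so that a detectable sub-distance stabilizer element is genuinely in conflict with the nondegeneracy hypothesis. Once those identifications are pinned down, the remainder is a direct application of Theorem~\ref{Th2} together with the stabilizing property of $\ket{s}$.
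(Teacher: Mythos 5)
Your proposal is correct and follows essentially the same route as the paper, which justifies this lemma (attributed to Grassl et al.) with the single observation that any member of the graph stabilizer is either a degenerate error or a non-detectable one. You have simply filled in the details of that dichotomy — $\clg(S^{*})=\mathbf{0}$ places $S^{*}$ in the degenerate branch of Theorem~\ref{Th2}, and a detectable such element acts as the identity on ${\cal Q}$ and is therefore mutually degenerate with $I$ — so nondegeneracy forces the minimum-weight stabilizer element to be non-detectable, giving $d\le d'({\cal G})$.
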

It follows from the fact that any member of the graph stabilizer is
either a degenerate error, or it is a non-detectable error.  Note,
however, as illustrated by the Example \ref{Ex:613}, in general, the
distance of a CWS code can actually be bigger than that of the graph
stabilizer state.

For a binary code ${\cal C}$, we will say that the $j$-th bit is
\emph{involved} in the code if there are vectors in the code for which
the value of $j$-th bit differ, $c_1^j\neq c_2^j$.  Alternatively, if
the all-zero vector $\mathbf{0}$ is in the code (which can always be
arranged), the condition is that there is a vector $\mathbf{c}\in{\cal
  C}$ where $j$-th bit is non-zero, $c^j\neq0$.

\begin{statement}
  \label{lemma:j-th}
  For a CWS code ${\cal Q}=({\cal G},{\cal C})$ with
  $K>1$, let us assume that $j$-th bit is involved in the code ${\cal
    C}$.  Then the graph-stabilizer generator $S_j$ violates the error detection
  condition in Theorem \ref{Th2}.  \end{statement}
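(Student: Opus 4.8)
The plan is to show that $S_j$ lands in the degenerate branch of Theorem~\ref{Th2}, and that the commutation requirement appearing there fails precisely because the $j$-th bit is involved. First I would compute the graph image of the generator. Writing $S_j=X_jZ^{\mathbf{r}_j}$ in the form~(\ref{eq:Umap}), its $X$-part $\mathbf{v}$ has a single nonzero entry, at position $j$, while its $Z$-part is $\mathbf{u}=\mathbf{r}_j$. Only the $i=j$ term then survives in the sum of Eq.~(\ref{ClMapping}), so
\begin{equation}
  \clg(S_j)=\mathbf{r}_j+\sum_{i=1}^n \mathbf{v}_i\,\mathbf{r}_i
  =\mathbf{r}_j+\mathbf{r}_j=\mathbf{0}\,(\mod 2).
\end{equation}
Thus $S_j$ belongs to the graph stabilizer, $\clg(S_j)=\mathbf{0}$, and Theorem~\ref{Th2} says it is detectable if and only if $Z^{\mathbf{c}}S_j=S_jZ^{\mathbf{c}}$ holds for every $\mathbf{c}\in{\cal C}$. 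So the whole question reduces to whether $S_j$ commutes with all codeword operators $Z^{\mathbf{c}}$.

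The second step is an elementary Pauli commutation count. Because all $Z$-type operators commute with one another, $Z^{\mathbf{c}}$ commutes with the factor $Z^{\mathbf{r}_j}$, and the only possible sign comes from moving $Z^{\mathbf{c}}$ past $X_j$. Using $Z_jX_j=-X_jZ_j$, I find
\begin{equation}
  Z^{\mathbf{c}}S_j=(-1)^{c_j}\,S_jZ^{\mathbf{c}},
\end{equation}
so $S_j$ and $Z^{\mathbf{c}}$ commute precisely when $c_j=0$ and anticommute when $c_j=1$.

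To conclude, I would translate the hypothesis that the $j$-th bit is involved in ${\cal C}$ into the existence of a codeword with $c_j=1$: by definition there are two codewords whose $j$-th bits differ (this is where $K>1$ enters), so at least one of them, call it $\mathbf{c}$, has $c_j=1$. For this $\mathbf{c}$ the previous display gives $Z^{\mathbf{c}}S_j=-S_jZ^{\mathbf{c}}\neq S_jZ^{\mathbf{c}}$, which violates the detection condition of Theorem~\ref{Th2}, proving the claim.

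I do not expect any genuine obstacle here: the argument is Eq.~(\ref{ClMapping}) applied once, followed by a single-qubit commutation. The only place deserving a little care is phrasing the ``involved bit'' hypothesis so that it correctly produces a codeword with nonzero $j$-th component, and noting that $S_j$ being a stabilizer generator ($\clg(S_j)=\mathbf{0}$) is exactly what forces us into the degenerate branch of Theorem~\ref{Th2} rather than the classical one covered by Lemma~\ref{lemma:upper-binary}.
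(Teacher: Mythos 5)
Your proof is correct and follows essentially the same route as the paper's: observe that $\clg(S_j)=\mathbf{0}$ places $S_j$ in the degenerate branch of Theorem~\ref{Th2}, note that commutation of $S_j$ with $Z^{\mathbf{c}}$ is governed by the $j$-th bit of $\mathbf{c}$, and use the ``involved bit'' hypothesis to produce a codeword that anticommutes. You merely spell out the two computations (the graph image and the sign $(-1)^{c_j}$) that the paper states without calculation.
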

\begin{proof}
  Since the generator $S_j$ is in the graph stabilizer,
  $\clg(S_j)=\mathbf{0}$, one has to check the degenerate condition in
  Theorem \ref{Th2}.  The commutativity of $S_j$ with a given
  $Z^\mathbf{c}$ is determined by the $j$-th bit of $\mathbf{c}$;
  conditions of the Lemma ensure that only one of the two vectors
  commute with $S_j$.
\end{proof}

Note that this means that the code distance cannot exceed that of any
$S_j$ corresponding to bits involved in the binary code.  Since at least
$d$ bits must be involved in the binary code, Lemma \ref{lemma:j-th}
guarantees the following bound
\begin{theorem}\label{Th:max-weight} The distance $d$ of a CWS code
  ${\cal Q}=({\cal G},{\cal C})$ cannot exceed the $d$-th
  largest weight of $S_{i}$, minimized over all graphs that are
  locally-equivalent to ${\cal G}$.
\end{theorem}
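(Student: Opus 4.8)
The plan is to derive the bound first for the given graph $\mathcal{G}$ from Lemma~\ref{lemma:j-th} together with a count of the bits that are involved in $\mathcal{C}$, and then to promote it to the whole local-complementation orbit. Throughout I assume $K>1$ (otherwise no bit is involved and there is nothing to prove) and, without loss of generality, that $\mathbf{0}\in\mathcal{C}$.

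First I would argue that at least $d$ bits are involved in $\mathcal{C}$. The support of every nonzero codeword is contained in the set of involved bits, so the binary distance of $\mathcal{C}$ cannot exceed the number of involved bits. By Lemma~\ref{lemma:upper-binary} the quantum distance $d$ does not exceed the distance of $\mathcal{C}$, and chaining the two inequalities shows that the number of involved bits is at least $d$.

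Next, for each involved bit $j$ Lemma~\ref{lemma:j-th} guarantees that $S_j$ fails the detection condition of Theorem~\ref{Th2}; that is, $S_j$ is an undetectable error. Since $d$ is, by definition, the smallest weight of an undetectable error, $\wgt(S_j)\ge d$ for every involved $j$. Hence there are at least $d$ indices $i$ with $\wgt(S_i)\ge d$. Listing the weights $\wgt(S_i)$ in nonincreasing order, the $d$-th entry cannot be smaller than $d$: otherwise only the top $d-1$ entries could be $\ge d$, contradicting the fact that $d$ of them are. This establishes $d\le$ ($d$-th largest weight of $S_i$) for the graph $\mathcal{G}$ itself.

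The last step is to minimize over the orbit. Every graph $\mathcal{G}'$ locally equivalent to $\mathcal{G}$ yields, after the corresponding local Clifford transformation, a standard-form representation $(\mathcal{G}',\mathcal{C}')$ of a code LC-equivalent to $\mathcal{Q}$ and hence of the same distance $d$; applying the bound just proved to this representation gives $d\le$ ($d$-th largest weight of the generators associated with $\mathcal{G}'$), and taking the minimum over all such $\mathcal{G}'$ yields the claim. I expect this last step to be the part requiring the most care: one must check that each locally equivalent graph really does furnish a valid standard-form representation with unchanged distance and with a binary code $\mathcal{C}'$ whose distance is still at least $d$, so that the counting argument of the previous paragraph reapplies verbatim. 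This holds because local complementation is implemented by a local Clifford, an equivalence preserving all code parameters, and because Lemma~\ref{lemma:upper-binary} applies without change to the new pair $(\mathcal{G}',\mathcal{C}')$.
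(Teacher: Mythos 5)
Your proposal is correct and follows essentially the same route as the paper: the paper derives the bound from Lemma~\ref{lemma:j-th} by noting that every involved bit yields an undetectable generator $S_j$ of weight at least $d$, that Lemma~\ref{lemma:upper-binary} forces at least $d$ involved bits, and that LC-equivalence lets one minimize over the orbit. Your write-up merely spells out the counting and the orbit step that the paper leaves implicit.
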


We will also be using the following
\begin{corollary}
  For a graph ${\cal G}$ with all vertices of the same degree $r$,
  the distance of a CWS code ${\cal Q}=({\cal G},{\cal C})$ cannot
  exceed $r+1$.
\end{corollary}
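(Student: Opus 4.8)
The plan is to apply Theorem~\ref{Th:max-weight} directly to the given regular graph $\mathcal{G}$, reducing the claim to a one-line weight count. First I would compute the weight of each graph-stabilizer generator $S_i = X_i Z^{\mathbf{r}_i}$ from Eq.~(\ref{eq:graph-generators}). On qubit $i$ the generator acts as the non-identity operator $X_i$, while on each neighbor $j$ of vertex $i$ (the positions where $r_{ij}=1$) it acts as $Z_j$. Because $\mathcal{G}$ is a simple graph, vertex $i$ is never its own neighbor, i.e.\ $r_{ii}=0$, so the $X$-position and the $Z$-positions are all distinct and do not overlap. Hence $\wgt(S_i) = 1 + \deg(i)$ exactly.

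Next I would invoke the hypothesis that every vertex has the same degree $r$. This immediately gives $\wgt(S_i) = r+1$ for all $i=1,\ldots,n$, so the multiset of generator weights is constant. Consequently, for any relevant $d$ (with $d\le n$) the $d$-th largest weight among the $S_i$ is simply $r+1$.

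Finally, I would feed this into Theorem~\ref{Th:max-weight}. That theorem bounds $d$ by the $d$-th largest generator weight \emph{minimized} over all graphs locally equivalent to $\mathcal{G}$; since a minimum over a family can only be smaller than the value at any single member, it suffices to evaluate the bound at the given graph $\mathcal{G}$ itself. This yields $d \le r+1$, as claimed. There is essentially no obstacle here: the only point requiring care is the weight computation for $S_i$, specifically the observation that the $X$ at vertex $i$ and the $Z$'s at its $r$ neighbors occupy distinct qubits, so that a vertex of degree $r$ produces a generator of weight exactly $r+1$ rather than $r$.
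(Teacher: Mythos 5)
Your proposal is correct and follows exactly the route the paper intends: the corollary is stated as an immediate consequence of Theorem~\ref{Th:max-weight}, obtained by noting that every generator $S_i=X_iZ^{\mathbf{r}_i}$ of an $r$-regular simple graph has weight exactly $r+1$ (the $X$ at vertex $i$ and the $Z$'s at its $r$ neighbors sit on distinct qubits since $r_{ii}=0$), and that the minimum over locally equivalent graphs can only lower this value. The paper offers no separate proof, and your one-line weight count plus the observation about the minimization is precisely the missing argument.
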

In particular, for any ring graph $r=2$, which gives $d\le3$, for any
double-ring graph\cite{Chuang-CWS-2009} $d\le4$, and for a large
enough square lattice wrapped into a torus, $d\le5$.

Obviously, to maximize the distance of a CWS code, one may want to
maximize the distance of the binary code ${\cal C}$.  To this end,
it is a good idea to make sure that every bit is involved in
${\cal C}$. For such codes, we have
\begin{theorem}\label{Th:all-bit} The distance of a CWS code ${\cal Q}=({\cal
    G},{\cal C})$ where the binary code ${\cal C}$ involves all bits
  cannot exceed the \emph{minimum} weight of $S_{i}$,
  minimized over all graphs that are locally-equivalent to ${\cal G}$.
\end{theorem}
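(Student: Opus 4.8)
The plan is to read the bound straight off Lemma~\ref{lemma:j-th}. Because every bit is involved in $\mathcal{C}$, that lemma applies to \emph{each} index $j$: every graph-stabilizer generator $S_j$ violates the detection condition of Theorem~\ref{Th2}, and is therefore an undetectable Pauli error. An undetectable error of weight $w$ forces $d\le w$, since a distance-$d$ code detects all errors of weight at most $d-1$. Applying this to all $j$ gives $d\le\wgt(S_j)$ for every $j$, hence $d\le\min_j\wgt(S_j)$ for the given graph $\mathcal{G}$. This reproduces, for the all-bits-involved case, the remark following Lemma~\ref{lemma:j-th}.

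To promote this to the minimum over the whole local-equivalence class, I would invoke the fact that the code distance is an LC invariant. For any graph $\mathcal{G}'$ locally equivalent to $\mathcal{G}$, I re-express the (LC-equivalent, hence with the same distance) code in standard form $(\mathcal{G}',\mathcal{C}')$ and rerun the previous paragraph to obtain $d\le\min_j\wgt(S_j^{\mathcal{G}'})$; minimizing over all such $\mathcal{G}'$ then yields the stated bound.

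The step I expect to be the real obstacle is the hidden hypothesis in that last paragraph: the argument through $\mathcal{G}'$ needs $\mathcal{C}'$ to \emph{again} involve all bits, because Lemma~\ref{lemma:j-th} only controls generators attached to involved bits. This is exactly where Theorem~\ref{Th:all-bit} is more fragile than Theorem~\ref{Th:max-weight}: there the weaker property ``at least $d$ bits are involved'' survives automatically, since $d\le d(\mathcal{C}')$ by Lemma~\ref{lemma:upper-binary}, whereas ``all bits involved'' need not persist. Concretely, a single local complementation at a vertex $v$ acts on the standard form by the involutive linear map $\mathbf{c}\mapsto\mathbf{c}+c_v\,\mathbf{r}_v$ on the codewords, which I would derive from $S_v=X_vZ^{\mathbf{r}_v}$ together with the single-qubit Clifford implementing the complementation. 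I would then check whether this map can kill a coordinate: the dangerous configuration is a bit $j\in N(v)$ whose value is perfectly correlated with $c_v$ across $\mathcal{C}$ --- equivalently, the product $S_jS_v$ already stabilizes $\mathcal{Q}$ --- in which case bit $j$ becomes uninvolved in $\mathcal{C}'$ and the possibly small weight $\wgt(S_j^{\mathcal{G}'})$ (a pure degeneracy, as in the generators $X_6$ and $X_1X_2$ of Example~\ref{Ex:613}) no longer limits $d$. The substance of the theorem thus rests on restricting the minimization to those locally equivalent graphs whose standard form retains every bit, and I would state that restriction explicitly before taking the minimum.
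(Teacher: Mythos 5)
Your first paragraph is precisely the paper's argument: Theorem~\ref{Th:all-bit} is given no separate proof in the text, but is stated as an immediate consequence of Lemma~\ref{lemma:j-th} and the remark following it (the same remark that yields Theorem~\ref{Th:max-weight}) --- when every bit is involved, every graph generator $S_j$ is an undetectable error, hence $d\le\wgt(S_j)$ for all $j$ and therefore $d\le\min_j\wgt(S_j)$ for the given graph. So for a fixed ${\cal G}$ you and the paper coincide exactly.

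The difficulty you isolate in your last paragraph is genuine and is not addressed anywhere in the paper. The transformation law $\mathbf{c}\mapsto\mathbf{c}+c_v\,\mathbf{r}_v$ under local complementation at $v$ is correct, and it erases an involved bit $j\in N(v)$ precisely when $c_j=c_v$ across ${\cal C}$, i.e.\ when $S_jS_v$ already stabilizes ${\cal Q}$; once bit $j$ is uninvolved in ${\cal C}'$, the generator $S_j^{{\cal G}'}$ is a degenerate stabilizer element whose weight need not bound $d$ (compare the weight-$1$ and weight-$2$ degeneracy generators $X_6$ and $X_1X_2$ against $d=3$ in Example~\ref{Ex:613}). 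Your contrast with Theorem~\ref{Th:max-weight} is exactly the right diagnosis: there the needed hypothesis (at least $d$ involved bits) regenerates itself in every standard form because $d\le d({\cal C}')$ by Lemma~\ref{lemma:upper-binary}, whereas ``all bits involved'' does not. Consequently the argument the paper relies on only establishes the minimum over those locally equivalent graphs whose induced standard form still involves every bit; the paper supplies no additional mechanism to justify the unrestricted minimization, so stating that restriction explicitly (or producing a separate argument, or a counterexample, for the excluded graphs) is the correct and more careful conclusion.
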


\section{Additive CWS codes and quaternary codes}
\label{sec:additive-CWS}

The stabilizer of an additive CWS code ${\cal Q}=({\cal G},{\cal C})$
is a subgroup of the Abelian graph stabilizer $\mathscr{S}_{\cal G}$,
and its generators $G_i\in \mathscr{S}$ can be expressed as products
of graph stabilizer generators $S_i$
\cite{Cross-CWS-2009,Li-Dumer-Grassl-Pryadko-2010}.  Explicitly,
\begin{equation}
G_i=\prod_{j=1}^nS_j^{P_{ij}},\label{eq:stabilizer-operator-CWS}
\end{equation}
where $P\in\{0,1\}^{n-k\times n}$ is the corresponding matrix of
binary coefficients.  With the help of
Eq.~(\ref{eq:graph-generators}), we obtain the following decomposition
for the generator matrix $G$ of the associated additive $\mathbb{F}_4$
code $C$,
\begin{equation}
  G=P\,(\omega\openone+R)\,,\label{GF4}
\end{equation}
where $R\in\{0,1\}^{n\times n}$ is the symmetric graph adjacency
matrix with zeros along the diagonal, and $\openone$ is the ${n\times
  n}$ identity matrix.  The relation between the binary code
${\cal C}$ and the quaternary code $C$ is explicitly given by the
following
\begin{statement}
  The additive $\mathbb{F}_4$ code $C$ with the generator
  matrix~(\ref{GF4}) is the map $\phi(\mathscr{S})$ of the stabilizer
  $\mathscr{S}$ of the additive CWS code ${\cal Q}=({\cal G},{\cal
    C})$ generated by the graph with the adjacency matrix $R$ and the
  linear binary code ${\cal C}$ if and only if $P$ is the parity check
  matrix of ${\cal C}$.
\end{statement}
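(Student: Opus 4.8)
The plan is to compute directly the action of each candidate generator $G_i$ on the basis states $Z^{\mathbf{c}}\ket{s}$, $\mathbf{c}\in\mathcal{C}$, of the code in Eq.~(\ref{eq:CWS-basis}), and to read off exactly when these operators fix all of $\mathcal{Q}$. The starting point is Eq.~(\ref{eq:eigen}): since $S_j=X_jZ^{\mathbf{r}_j}$ anticommutes with $Z^{\mathbf{c}}$ only through its $X_j$ factor, one has $S_jZ^{\mathbf{c}}=(-1)^{c_j}Z^{\mathbf{c}}S_j$, so that $S_j(Z^{\mathbf{c}}\ket{s})=(-1)^{c_j}Z^{\mathbf{c}}\ket{s}$. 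Because the $S_j$ mutually commute, the product $G_i=\prod_jS_j^{P_{ij}}$ is again Hermitian and acts on the same eigenbasis with eigenvalue equal to the product of the individual ones, namely $(-1)^{(P\mathbf{c})_i}$. Hence $G_i$ fixes every basis vector and thus stabilizes $\mathcal{Q}$ if and only if $(P\mathbf{c})_i=0$ for all $\mathbf{c}\in\mathcal{C}$. Collecting these conditions over all rows $i$, the whole set $\{G_i\}$ lies in $\mathscr{S}$ if and only if $P\mathbf{c}=\mathbf{0}$ for every codeword, i.e.\ $\mathcal{C}\subseteq\ker P$ over $\mathbb{F}_2$.

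Next I would pass to $\mathbb{F}_4$ vectors. Since $\phi$ is additive on (phase-free) Pauli products and $\phi(S_j)$ equals the $j$-th row of $\omega\openone+R$ (namely $\mathbf{r}_j$ with an extra $\omega$ in position $j$), linearity gives that $\phi(G_i)=\sum_jP_{ij}\phi(S_j)$ is the $i$-th row of $G=P(\omega\openone+R)$ of Eq.~(\ref{GF4}). Thus $C$ is by construction the $\phi$-image of the group $\langle G_1,\dots,G_{n-k}\rangle$, and the claim reduces to deciding when this group is \emph{all} of $\mathscr{S}$. Two ingredients are needed: every $G_i\in\mathscr{S}$ (the containment above), and the $G_i$ are independent, so that they generate the full $2^{n-k}$-element stabilizer of the $[[n,k,d]]$ code. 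For independence I would use that the graph state is a genuine stabilizer state, so the $n$ operators $S_j$ are independent and the rows of $\omega\openone+R$ are $\mathbb{F}_2$-independent; consequently the map $\mathbf{b}\mapsto\mathbf{b}\,(\omega\openone+R)$ is injective on $\mathbb{F}_2^n$, and $\sum_ia_iG_i=\mathbf{0}$, equivalently $\mathbf{a}^{T}P\,(\omega\openone+R)=\mathbf{0}$, forces $\mathbf{a}^{T}P=\mathbf{0}$. Hence the $G_i$ are independent exactly when $\mathrm{rank}\,P=n-k$.

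With these pieces the equivalence follows by a dimension count. If $P$ is a parity-check matrix of $\mathcal{C}$, then $\ker P=\mathcal{C}$ (each $G_i$ stabilizes $\mathcal{Q}$) and $\mathrm{rank}\,P=n-k$ (the $G_i$ are independent); the group they generate has $2^{n-k}$ elements and therefore equals $\mathscr{S}$, giving $C=\phi(\mathscr{S})$. Conversely, if $C=\phi(\mathscr{S})$ then the $G_i$ generate $\mathscr{S}$: they all stabilize $\mathcal{Q}$, forcing $\mathcal{C}\subseteq\ker P$, and they are independent, forcing $\mathrm{rank}\,P=n-k$, i.e.\ $\dim\ker P=k=\dim\mathcal{C}$; together with the inclusion this yields $\ker P=\mathcal{C}$, which is precisely the statement that $P$ is a parity-check matrix of $\mathcal{C}$.

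I expect the main obstacle to be the independence/rank step rather than the eigenvalue computation: one must justify passing from independence of the $\mathbb{F}_4$ generators $\phi(G_i)$ back to independence of the binary rows of $P$, which is exactly where the self-dual, $n$-generator structure of the graph-state stabilizer enters and guarantees that $\omega\openone+R$ has full $\mathbb{F}_2$-rank. The other point requiring care is the forward direction's conclusion $\ker P=\mathcal{C}$ (not merely $\supseteq$), which I would obtain purely from matching dimensions, $k=\dim\mathcal{C}=\dim\ker P$.
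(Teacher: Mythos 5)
Your proposal is correct and follows essentially the same route as the paper: the core is the identical eigenvalue computation $G_i\,Z^{\mathbf c}\ket s=(-1)^{(P\mathbf c)_i}Z^{\mathbf c}\ket s$, which is exactly Eq.~(\ref{eq:check-CWS}), after which the paper simply states that both directions "follow from the definition of the parity check matrix." Your extra care about independence of the $G_i$ (full $\mathbb{F}_2$-rank of $\omega\openone+R$ forcing $\mathrm{rank}\,P=n-k$) and the dimension count giving $\ker P=\mathcal{C}$ rather than mere containment is precisely the bookkeeping the paper leaves implicit, so it is a welcome elaboration rather than a different argument.
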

\begin{proof}
  Use the basis vectors in the form (\ref{eq:CWS-basis}) and the
  commuting operators (\ref{eq:stabilizer-operator-CWS}) corresponding
  to the rows of the matrix $G$ [Eq.~(\ref{GF4})].  Direct calculation
  gives
  \begin{equation}
    \label{eq:check-CWS}
    G_i \,Z^\mathbf{c}\ket s
    = \prod_{j=1}^n S_j^{P_{ij}} \,Z^\mathbf{c}\ket s
    = (-1)^{P_{ij}c^j} \,Z^\mathbf{c}\,\ket s,
  \end{equation}
  were summation over repeated indices is assumed.  The statement of
  the Lemma (both ways) follows from the definition of the parity
  check matrix.
\end{proof}

We can now easily relate the error detection conditions for additive
codes in Theorems \ref{Th1} (codes over $\mathbb{F}_4$) and \ref{Th2}
(CWS codes).  The code $C$ in Theorem \ref{Th1} is given by additive
combinations of the $n-k$ rows of the generator matrix~(\ref{GF4}).
Evaluating the outer trace product of the generator matrix~(\ref{GF4})
with a vector $\mathbf{e}=\mathbf{u}+\omega \mathbf{v}$, we obtain the
condition for the vector to be in $C_\perp$
\begin{equation}
  0=G*\mathbf{e}=P(\mathbf{u}+R\mathbf{v}).\label{Orthogonality}
\end{equation}
This uniform linear system of $n-k$ equations with $2n$ variables has
$n+k$ linearly-independent solutions.  The corresponding basis can be
chosen as a set of $k$ linearly-independent ``classical'' vectors
$\mathbf{e}_i=\mathbf{u}_i$, where $P \mathbf{u}_i=0$ and the
corresponding $v_i=0$, $i=1,\ldots,k$, plus $n$ linearly-independent
vectors such that $u_j=R v_j$, $j=k+1,\ldots, k+n$.  Some linear
combinations of the latter vectors are actually in $C$.  These can be
found using the identity $(C_\perp)_\perp=C$: the corresponding
$\mathbf{v}_j$ have to satisfy $\mathbf{u}_i\cdot \mathbf{v}_j=0$,
which precisely corresponds to the degenerate case, $\clg(E)=0$, in
Theorem \ref{Th2}.

General theory of CWS codes guarantees that generator matrix of a
quantum code equivalent to any additive self-orthogonal code over
$\mathbb{F}_4$ can be decomposed in the form (\ref{GF4}).
Conversely, any matrix in the form (\ref{GF4}) with binary
matrices $P$ and $R$ generates a self-orthogonal code over
$\mathbb{F}_4$ as long as the matrix $R$ is symmetric.  We use it
in the following section to prove the lower Gilbert-Varshamov (GV)
bound for the parameters of an additive CWS code which can be
obtained from a given graph.

\section{GV bound for the additive CWS codes with a given graph}
\label{sec:gv}

The GV bound is a counting argument which non-constructively proves
the existence of codes with parameters exceeding certain
threshold. The argument is based on the fact that the set of possible
codes (vector spaces) vastly outnumbers the set of vectors. Then, if
we count all codes of a given length $n$, and then subtract the number
of codes that contain any vector of weight $d-1$ or less, the
remaining codes (if any) will all have distance $d$ or more. This
``greedy'' argument ignores any possible double counting of codes that
contain several small-weight vectors.  Note that the GV bound
necessarily gives asymptotically \emph{good} codes with relative
distance $\delta\equiv d/n$ and code rate $R\equiv k/n$ bounded away
from $0$ as $n\to\infty$.

For the entire class of pure stabilizer codes,  the asymptotic GV
bound\cite{Feng:dec.2004} states that there exist
 codes of length $n\to\infty$ such that
\begin{equation}
  \delta \log_23+H_2(\delta)\ge 1-R,
\label{eq:GV1}
\end{equation}
where $H_2(\delta)\equiv -\delta\log_2
\delta-(1-\delta)\log_2(1-\delta)$ is the binary entropy function.  We
are going to prove that the same bound also holds for pure CWS codes
corresponding to a given graph ${\cal G}$, as long as $d\le d'({\cal
  G})$ [see Lemma \ref{lemma:upper-pure}].  We are using
Eq.~(\ref{GF4}) to parameterize the stabilizer matrices; the resulting
$\mathbb{F}_{4}$ codes are automatically self-orthogonal.  Let
$N_{n,k}^{\mathcal{G}}$ be the number of CWS codes that have length
$n$, dimension at least $k$, and correspond to a given graph
$\mathcal{G}$.  Let also $N_{\mathbf{e},n,k}^{\mathcal{G}}$ be the
number of such codes which contain a given vector
$\mathbf{e}=\mathbf{u}+\omega \mathbf{v}$,
$\wgt\mathbf{e}<d^{\prime}(\mathcal{G})$, in $C_{\perp}$ [see
Eq.~(\ref{GF4})]. The corresponding
condition~(\ref{eq:C-perp}) is given by the trace inner product
(\ref{Orthogonality}). For $\mathop{\rm wgt}\mathbf{e}<d^{\prime}%
(\mathcal{G})$, the binary vector
$\mathbf{c}\equiv\mathbf{u}+R\mathbf{v}$ is always non-zero (which
also guarantees that $\mathbf{e}\notin C$).  As a result,
$N_{\mathbf{e},n,k}^{\mathcal{G}}$ and $N_{n,k}^{\mathcal{G}}$
represent the corresponding numbers for the \emph{binary} codes.

Then the standard counting arguments \cite{MS-book} show that
\begin{equation}
(2^{n}-1)N_{\mathbf{e},n,k}^{\mathcal{G}}=(2^{k}-1)N_{n,k}^{\mathcal{G}}%
\end{equation}
Here we use the fact that each of $2^{n}-1$ vectors $\mathbf{c}$
belongs to the same number $N_{\mathbf{e},n,k}^{\mathcal{G}}$ of
binary codes; also each of $N_{n,k}^{\mathcal{G}}$ binary codes
contains $2^{k}-1$ nonzero vectors $\mathbf{c}$.  The number of
quaternary vectors of weight $s$ is $3^{s}{n\choose s}$. Thus, for any
graph $\mathcal{G}$, there exists a distance-$d$ CWS code as long as
\begin{equation}
N_{\mathbf{e},n,k}^{\mathcal{G}}\sum_{s=1}^{d-1}3^{s}{n\choose s}%
<N_{n,k}^{\mathcal{G}} \label{eq:greedy-full}%
\end{equation}
Now we see that there exist $[[n,k,d]]$ CWS codes for the
graph $\mathcal{G}$ with distance
\begin{equation}
d=\min \{d_{GV},d_{\mathrm{max}}\},
\end{equation}
where
$d_{\mathrm{max}}$ is the distance of the graph state $d^{\prime}%
(\mathcal{G})$ and
\begin{equation}
d_{GV}=\max d:\sum_{s=1}^{d-1}3^{s}{n\choose s}<{{\frac{2^{n}%
-1}{2^{k}-1}.}}%
\label{eq:SumP}%
\end{equation}

Note that thus obtained quantum codes are always pure, since the
summation in the l.h.s.\ can only be extended up to
$d_{\mathrm{max}}-1$. Apart from this latter condition,
Eq.~(\ref{eq:SumP}) is identical to the quantum Gilbert-Varshamov
bound\cite{Feng:dec.2004} for pure stabilizer codes, and takes the
asymptotic form~(\ref{eq:GV1}) as $n\rightarrow\infty$.

The exact GV bound $d\geq d_{GV}$ for pure stabilizer codes
(without the restriction on the distance) is recovered if we go
over different graphs. Indeed, the GV bound~(\ref{eq:GV1}) also
applies for the special case of $k=0$,
corresponding to stabilizer states or self-dual
codes\cite{Rains-Sloane-self-dual-codes-1998}. The GV bound on the
relative distance is monotonous in $k$ and reaches its maximum at
\begin{equation}
\delta_{k=0}\approx0.189. \label{eq:GV-zero-rate}%
\end{equation}
Then for given $n$ and $\delta<\delta_{k=0}$ one can always find a
suitable graph such that the GV bound $d\leq d_{GV}$ becomes more
restrictive than the condition $d\leq d_{\mathrm{max}}$.

In practice, graphs with large distance $d^{\prime}(\mathcal{G})$
are complicated (have too many edges). It is much easier to come
up with graph
families corresponding to a fixed graph-state distance $d^{\prime}%
(\mathcal{G})$. For $n\to\infty$, the corresponding code families
approach the maximum rate $R=1$ and have  asymptotic redundancy
$r\equiv n-k$ defined by Eq.~(\ref{eq:SumP}):
\begin{equation}
r\leq d\log_{2}3+nH_{2}(d/n).
\label{eq:GV-redundancy}%
\end{equation}
It is readily verified that the r.h.s of estimate
(\ref{eq:GV-redundancy}) has the order of  $d\log_2(3n/d)$ if
$d$=const and $n\to\infty$.

  \begin{figure}
    \includegraphics[width=0.5\columnwidth]{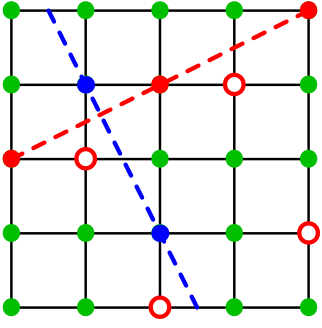}
    \caption{Square-lattice additive CWS code $[[25,4,4]]$. Circles
      represent the qubits. All $k=4$ translations of the empty-circle
      pattern form the classical codewords $\mathbf{c}_{i}$.  The
      generators of the stabilizer are formed as the products of the
      graph-state stabilizer generators along the directions parallel
      to the dashed lines.}
\label{fig3}
\end{figure}

\begin{example}
  \label{Ex:square-planar}
  Graphs in the form of sufficiently large finite square lattice fragments
  [Fig.~\ref{fig3}] have maximum distance $d_\mathrm{max}=5$, but this
  requires that the bits in the corners and around the perimeter be not
  involved in the classical code.  Somewhat better redundancy can be achieved
  by avoiding only the bits in the corners, which gives $d_\mathrm{max}=4$.
  Consider the family of classical codes where the codewords are obtained by
  taking all translations of the pattern shown in Fig.~\ref{fig3} with open
  symbols.  The weight of any linear combination of such codewords is at least
  4.  The lattice shown corresponds to the code $[[25,4,4]]$, while general
  $m_x\times m_y$ lattice gives the code with the parameters $[[m_x\,
  m_y,(m_x-3)(m_y-3),4]]$.  Asymptotically, the redundancy for $m_x=m_y$ is
  $n-k\approx 6 n^{1/2}$, $n\to \infty$, which is bigger than the logarithm in
  Eq.~(\ref{eq:GV-redundancy}).  However, the fraction of auxiliary qubits
 vanishes as $\propto 1/n^{1/2}$ for large $n$.  The code distance can
  be increased with higher-dimensional generalizations, e.g., we can
  generalize this construction to $D$-dimensional hypercubic lattice with $2D$
  nearest neighbors so that the distance is $d=2D$ in full analogy with the
  two dimensional case.  The corresponding redundancy will scale with the area
  of the boundary.
\end{example}

While the code in Fig.~\ref{fig3} serves as a good illustration
of the concept of lattice codes with simple stabilizer structure,
it is still far from optimal.  On the $5\times5$ square lattice we
have constructed numerically a code $[[25,9,5]]$ with weight-$7$
codewords which can be mapped into each other by translations and
rotations. This design is only one logical qubit short of the
best-known generic code $[[25,10,5]]$.
\begin{example}
  Consider graphs in the form of $L\times L$ square lattices wrapped into tori
  due to periodic boundary conditions.  For $L\ge 5$, these graphs have the
  distance $d'(\mathcal{G})=5$.  GV bound~(\ref{eq:greedy-full}) shows that
  the CWS codes with the following parameters can be obtained for these
  graphs: $[[25,4,5]]$, $[[36,13,5]]$, $[[49,24,5]]$, $[[64,38,5]]$,
  $[[81,53,5]]$,  \ldots.
\end{example}

\begin{example}
  Consider graphs in the form of $L\times L$ triangular lattices
  wrapped into tori due to periodic boundary conditions.  These graphs
  have the distance $d'(\mathcal{G})=\min(L,7)$.  GV
  bound~(\ref{eq:greedy-full}) shows that the CWS codes with the
  following parameters can be obtained for these graphs: $[[36,9,6]]$, $[[49,15,7]]$, $[[64,28,7]]$, $[[81,43,7]]$, \ldots.
\end{example}

\section{Single-generator additive cyclic codes}
\label{sec:cyclic-CWS}

Example \ref{Ex:713cws} shows that a cyclic additive code does not
necessarily preserve its symmetry when converted to CWS standard form.
By a \emph{cyclic} additive CWS code we just mean a code which is
cyclic in standard form, with a circulant graph.
For such a code, Eq.~(\ref{GF4}) can be rewritten as the generator
polynomial,
\begin{equation}
  \label{eq:GF4-polynomial}
  g(x)=p(x)[\omega+r(x)],
\end{equation}
where the polynomials $p(x)$ and $r(x)$ are binary, $p(x)$ is the
parity-check polynomial of a binary cyclic code (and therefore must divide
$x^n-1$), while $r(x)$ corresponds to a symmetric circulant matrix,
\begin{equation}
r(x^{n-1})=r(x)\;(\mod x^n-1).\label{eq:symmetric-polynomial}
\end{equation}
Any such \emph{symmetric} polynomial $r(x)$ leads to a valid
self-orthogonal additive code.  The dimension of the quantum code
corresponding to the generator polynomial (\ref{eq:GF4-polynomial}) is
$k=n-\deg p(x)$.

Previously, the additive cyclic QECCs were introduced in Theorem 14 of
Ref.~\cite{Calderbank-1997}, stating that any such code has two
generators.  A {\it single}-generator additive code described by
Eq.~(\ref{eq:GF4-polynomial}) represents a new setting, in which the
second generator is equal to zero.  This condition gives a
self-orthogonal additive code $C$ [see
  Sec.~\ref{sec:intro-stab-codes}] with no binary codewords (any
$\mathbf{e}\in C$, $\mathbf{e}\equiv\mathbf{u}+\omega \mathbf{v}$, has
$\mathbf{v}\neq 0$).

A somewhat wider class of {\em single-generator\/} cyclic additive
codes can be also defined via Eq.~(\ref{eq:GF4-polynomial}), without
requiring the symmetry (\ref{eq:symmetric-polynomial}) of $r(x)$.
Then two codes ${\cal Q}$ and ${\cal Q}'$ that have a generator
polynomial in the form (\ref{eq:GF4-polynomial}) with the same
$p(x)=p'(x)$ are equivalent if and only if
\begin{equation}
  r(x)=r'(x)\mod q(x),\label{eq:circulant-r-equivalence}
\end{equation}
where $q(x)=(x^n-1)/p(x)$ is the generator polynomial of
the binary code\cite{Calderbank-1997}.  Such a
polynomial~(\ref{eq:GF4-polynomial}) generates a self-orthogonal
$\mathbb{F}_4$ code if and only if \cite{Calderbank-1997}
\begin{equation}
  \label{eq:cyclic-orthogonality}
  p(x) p(x^{n-1}) r(x^{n-1})=   p(x) p(x^{n-1}) r(x) \,(\mod x^n-1).
\end{equation}
This guarantees self-orthogonality for any $r(x)$ as long as
\begin{equation}
  p(x)p(x^{n-1})=0\:\mod x^n-1.\label{eq:cyclic-orthogonality-guarantee}
\end{equation}
An alternative formulation of this sufficient condition is that the
corresponding generator polynomial $q(x)$ must contain no more than
one root from each pair $(\alpha, \alpha^{-1})$ of mutually conjugate
$n$\,th roots of unity, $\alpha^n=1$.  In particular, a
self-reciprocal (\emph{palindromic}) polynomial\footnote{In the
  literature such polynomials have also been called ``symmetric''.  We
  prefer to reserve this term for the
  polynomials~(\ref{eq:symmetric-polynomial}) which correspond to
  symmetric circulant matrices.  Palindromic polynomials have
  reflection symmetry with respect to their ``centers'', while
  Eq.~(\ref{eq:symmetric-polynomial}) corresponds to a symmetry with
  respect to the free term, with an implicit circulant symmetry.},
\begin{equation}
  \label{eq:palindromic}
  x^{\deg q(x)}q(1/x)=q(x),
\end{equation}
always contains roots in pairs $\alpha$ and $\alpha^{-1}$. For such
polynomials Eq.~(\ref{eq:cyclic-orthogonality-guarantee}) always fails
(including the special case of $q(x)=1+x$ which has only one root
$\alpha=\alpha^{-1}=1$).

\subsection{Single-generator cyclic codes from a binary code}
\label{sec:one-gen}
The algebraic condition~(\ref{eq:linear-cyclic-condition}) on check
polynomials for linear cyclic codes makes them simpler to implement
but also dramatically restricts their number. In particular, the
general counting approach to finding CWS codes [see
  Sec.~\ref{sec:gv}], where one first chooses a graph, and then
searches for a suitable binary code can hardly be applied to cyclic CWS
codes. Even for classical binary cyclic codes, there are no counting
arguments known to date that yield asymptotically good codes, let
alone the stronger GV bound (see Research Problem 9.2 in
Ref.~\onlinecite {MS-book}). Also, long BCH codes---one of the major
subclasses of cyclic codes---are asymptotically bad and have a slowly
declining relative distance $\delta\sim (2\ln R^{-1})/\log_2n$ for
any code rate $R$. On the other hand, binary cyclic codes often
achieve the best known parameters (exceeding the GV bound) on short
lengths $n\leq 256$. Thus, using simple cyclic codes in quantum design
can yield both good parameters and feasible implementation on the
short blocks.

To better evaluate code distance of single-generator quantum cyclic codes
(\ref{eq:GF4-polynomial}), we will modify our counting approach of
Sec.~\ref{sec:gv} and begin with a binary cyclic code. Namely, we will
fix some parity-check polynomial $p(x)$ with a desired degree $k$
among the binary factors of $x^n-1$. Then we will search for a
polynomial $r(x)$, either corresponding to a cyclic graph [see
  Eq.~(\ref{eq:symmetric-polynomial})], or satisfying the more general
orthogonality condition (\ref{eq:cyclic-orthogonality}).  However,
this transition will show that the parameters of quantum codes
generated this way strongly depend on the chosen binary code.  We will
concentrate exclusively on the binary codes with irreducible generator
polynomial $q(x)$. We will show that the distance of such a cyclic CWS
code is limited from below by the GV bound (or the variants thereof)
and from above by the distance of the classical cyclic code. Since GV
bound always produces asymptotically good codes, the parameters of our
quantum codes will be mostly limited (at least, for long blocks) by
their binary counterparts.

We begin with analyzing the condition~(\ref{Orthogonality}) for a
cyclic CWS code. For a vector $\mathbf{e}\in\mathbb{F}_4^n$ to be in
 $C_\perp,$ this condition can be
rewritten in terms of the corresponding polynomials,
\begin{equation}
  \label{eq:poly-orthogonality}
  p(x)\,\left[u(x)+r(x)v(x)\right]=0\:\mod x^{n}-1\:,
\end{equation}
where the coefficients of the (reversed for notational convenience) polynomial
$e(x^{n-1})\equiv u(x)+\omega v(x)$ are given by the components of the vector
$\mathbf{e}\in\mathbb{F}_4^n$.  Since binary $p(x)$ divides $x^n-1$, we can
rewrite this in terms of the corresponding generator polynomial
$q(x)=(x^n-1)/p(x)$ for the binary code ${\cal C}$,
\begin{equation}
  \label{eq:poly-orthogonality-two}
  u(x)+r(x)v(x)=0\:\mod q(x).
\end{equation}
Now, if $v(x)$ is mutually prime with $q(x)$,
Eq.~(\ref{eq:poly-orthogonality-two}) can be just solved for $r(x)$.  In this
case the answer is unique $[\mod q(x)]$.
On the other hand, multiple solutions for $r(x)$ are possible when $
\gcd[v(x),q(x)]\neq1$.
In this work, we avoid the complications caused in the latter
case\footnote{For polynomials $q(x)$ with multiple factors, distance
  estimates of quantum codes lead to the  estimates of weight
  spectra of classical cyclic codes which contain the code generated by $q(x)$,
  which is beyond the scope of this work.}
and only consider irreducible polynomials $q(x)$.

Overall, for any irreducible $q(x)$ and any $\mathbf{e}$ with
$\mathbf{v}\neq\mathbf{0}$ and $\wgt \mathbf{e}<d({\cal C})$,
Eq.~(\ref{eq:poly-orthogonality-two}) has a unique solution for $r(x)$
such that $\deg r(x)<\deg q(x)=n-k$. Respectively, there is no more
than one additive quantum code such that $\mathbf{e}\in
C_\perp$. Generally, only some of thus obtained $r(x)$ correspond to
self-orthogonal codes, see Eq.~(\ref{eq:cyclic-orthogonality}).

Below we complete the greedy argument by counting the polynomials
$r(x)$ corresponding to self-orthogonal codes,
Eq.~(\ref{eq:cyclic-orthogonality}).  We consider separately the case
when the irreducible polynomial $q(x)$ is palindromic [see
  Eq.~(\ref{eq:palindromic})] in Lemma~\ref{lemma:GV-cyclic} below,
and when it is not in
\begin{statement}
  \label{lemma:cyclic-one}
  Consider a cyclic binary code ${\cal C}[n,k,d_{\cal C}]$ with the
  generator polynomial $q(x)$ which is both irreducible and
  non-palindromic, $x^{\deg q(x)}q(x^{-1})\neq q(x)$.  Then there
  exists a single-generator additive cyclic code $[[n,k,d]]$ with
  distance
  $$d= \min(d_{\cal C},d_\mathrm{GV}),$$
  restricted  by both the distance $d_{\cal C}$ of the binary
  code and the following variant of GV bound
 \begin{equation}
    \label{eq:gv-cyclic-one}
 d_\mathrm{GV}=\max d:    \sum_{s=1}^{d-1}
    \left(3^s-3\right){\gcd(s,n)\over n}{n\choose s}\le 2^{n-k}-2.
    \end{equation}
\end{statement}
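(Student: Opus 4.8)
The plan is to set up a counting (greedy/Gilbert-Varshamov) argument that parallels Section~\ref{sec:gv}, but now specialized to the cyclic setting where the graph-polynomial $r(x)$ plays the role of the free parameter. Fix the irreducible non-palindromic generator polynomial $q(x)$ of degree $n-k$ for the binary code ${\cal C}$. By the analysis preceding the statement, for any $\mathbf{e}=\mathbf{u}+\omega\mathbf{v}$ with $\mathbf{v}\neq\mathbf{0}$ and $\wgt\mathbf{e}<d_{\cal C}$, the orthogonality condition~(\ref{eq:poly-orthogonality-two}) determines $r(x)$ \emph{uniquely} modulo $q(x)$ (using $\gcd[v(x),q(x)]=1$, which holds because $q$ is irreducible and $v\neq0$ has degree $<\deg q$ once we exclude the trivial vanishing). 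Thus each ``bad'' vector $\mathbf{e}$ rules out at most one candidate polynomial $r(x)$, and the total number of candidate polynomials is $2^{n-k}$. The greedy bound will then read: a good code exists provided the number of bad vectors that force a \emph{self-orthogonal} $r(x)$ is strictly less than the number of available self-orthogonal $r(x)$.

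The key refinement---and the source of the factor $(3^s-3)\gcd(s,n)/n$ in Eq.~(\ref{eq:gv-cyclic-one})---is that in the cyclic world both the vectors $\mathbf{e}$ and the codes come in orbits under cyclic shifts. First I would argue that since $q(x)$ is non-palindromic, the sufficient condition~(\ref{eq:cyclic-orthogonality-guarantee}) holds, so \emph{every} $r(x)$ yields a self-orthogonal code; this is where non-palindromicity is essential and is why this case is separated from the palindromic Lemma~\ref{lemma:GV-cyclic}. Next I would exploit cyclic symmetry to count weight-$s$ vectors $\mathbf{e}$ by orbits: a shift of $\mathbf{e}$ corresponds to multiplication by $x$, which maps a bad $\mathbf{e}$ (with its associated $r$) to another bad $\mathbf{e}$ with the \emph{same} $r(x)$, because the orthogonality condition is cyclic-invariant. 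Hence a whole shift-orbit of bad vectors eliminates only a single $r(x)$. The generic orbit has length $n$, so the naive count $3^s\binom{n}{s}$ of weight-$s$ quaternary vectors overcounts eliminated polynomials by a factor of $n/\gcd(s,n)$, which accounts for the $\gcd(s,n)/n$ factor; vectors of weight divisible by shifts with shorter period are handled by the $\gcd$. The subtraction of $3$ (giving $3^s-3$) removes the three purely ``classical'' scalings $\mathbf{e},\omega\mathbf{e},\overline\omega\mathbf{e}$ of vectors with $\mathbf{v}=\mathbf{0}$, which do not constrain $r(x)$ at all; similarly the right-hand side is $2^{n-k}-2$ rather than $2^{n-k}$ to exclude the trivial/degenerate choices.

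Assembling these pieces, the greedy inequality becomes
\begin{equation}
  \sum_{s=1}^{d-1}(3^s-3)\,\frac{\gcd(s,n)}{n}\binom{n}{s}\le 2^{n-k}-2,
\end{equation}
and whenever this holds there remains at least one admissible $r(x)$, hence a self-orthogonal additive cyclic code with no low-weight vector in $C_\perp\setminus C$ up to weight $d-1$; by Theorem~\ref{Th1} this is an $[[n,k,d]]$ code. The upper cutoff $d\le d_{\cal C}$ comes directly from Lemma~\ref{lemma:upper-binary} (the classical distance bounds the quantum distance), giving $d=\min(d_{\cal C},d_\mathrm{GV})$.

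I expect the main obstacle to be the orbit-counting bookkeeping: making rigorous that shifting $\mathbf{e}\mapsto x\,\mathbf{e}$ genuinely preserves the solution $r(x)$ of~(\ref{eq:poly-orthogonality-two}), and correctly accounting for short orbits (vectors fixed by a nontrivial shift) so that the weighting by $\gcd(s,n)/n$ is an honest upper bound on distinct eliminated polynomials rather than an average. One must verify that multiplying the congruence by $x$ shifts both $u(x)$ and $v(x)$ consistently modulo $q(x)\mid x^n-1$, and that the uniqueness of $r(x)$ (irreducibility of $q$) means each orbit of bad $\mathbf{e}$ corresponds to exactly one $r(x)$, never more; the $\gcd$ factor then bounds the orbit count from above, keeping the inequality conservative in the correct direction. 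The separate subtraction of the three classical vectors must also be reconciled with the orbit count so that no bad polynomial is double-subtracted, which is a delicate but routine inclusion-exclusion check.
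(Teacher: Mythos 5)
Your proposal follows essentially the same route as the paper's proof: non-palindromicity of the irreducible $q(x)$ guarantees self-orthogonality for every $r(x)$, irreducibility gives uniqueness of $r(x)$ modulo $q(x)$ for each bad vector with $v(x)\neq 0$, cyclic orbits of length at least $n/\gcd(s,n)$ all pin down the same $r(x)$, and the three $\mathbb{F}_4^{*}$-multiples of binary-supported vectors together with the exclusion of $r=0,1$ account for the factors $3^{s}-3$ and $2^{n-k}-2$. The one imprecision is your claim that these discarded vectors ``do not constrain $r(x)$ at all'': those with $u=0$ force $r\equiv 0$ and those with $u=v$ force $r\equiv 1$ (only the $v=0$ vectors are unconstraining, being never in $C_{\perp}$), which is precisely why those two values of $r$ are removed from the right-hand side; otherwise the argument matches the paper.
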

\begin{proof}
  The non-palindromic generator polynomial $q(x)$ is one of the
  factors in $x^n-1$ which also contains its reciprocal, $x^{\deg
    q(x)}q(x^{-1})$.  This implies that the corresponding parity-check
  polynomial $p(x)$ satisfies Eq.~(\ref{eq:cyclic-orthogonality}).
  Further, since $q(x)$ is irreducible, the solution $r(x)$ of
  Eq.~(\ref{eq:poly-orthogonality-two}) is unique assuming $v(x)\neq
  0$ and $\deg r(x)<n-k$, which gives the exponential term in the
  r.h.s.\ of Eq.~(\ref{eq:gv-cyclic-one}).
  Eq.~(\ref{eq:gv-cyclic-one}) improves on the standard GV
  inequality~(\ref{eq:SumP}) by discarding a few sets of vectors.  The
  first set are vectors with $u(x)=0$, which implies $r(x)=0\:\mod
  q(x)$.  The second set are vectors with $u(x)=v(x)$, which all give
  $r(x)=1\:\mod q(x)$.  The third set are non-zero vectors with
  $v(x)=0$, which can never be in $C_\perp$ corresponding to the
  generator~(\ref{eq:GF4-polynomial}). Finally, note that any error
  vector of weight $s$ produces at least $n/\gcd(s,n)$ different
  cyclic shifts. All of these cyclic shifts give the same polynomial
  $r(x)$ and can be discounted.  The condition $d\le d_{\cal C}$ comes
  from Lemma \ref{lemma:upper-binary}.
\end{proof}

\textit{Note.} The lhs of bound (\ref{eq:gv-cyclic-one}) limits
the number of cyclic classes for $4$-ary vectors $\mathbf{e}$ of
weight $s\leq d-1.$ Most of these vectors have the maximum period
$n.$ Therefore, it can be proven that the term $\gcd(s,n)/n$ in
(\ref{eq:gv-cyclic-one}) can be replaced with the smaller term
that rapidly tends to 1/n for large n.  In turn, bound
(\ref{eq:gv-cyclic-one}) adds about $\log_{2}n$ information qubits
to bound (\ref{eq:SumP}) but tends to the standard quantum GV
bound~(\ref{eq:GV1}) as $n\rightarrow\infty$.

In the following example, this bound coincides with inequality
$d\leq d_{\mathcal{C}}$, which uniquely sets code distance $d$.

\begin{example}
  The family of the binary codes with the parameters
  $[n=2^{4h}+2^{3h}-2^h-1,k=n-6h,3]$, $h=1,2,\ldots$, constructed in
  Ref.~\onlinecite{Ding:OCT2002}, has irreducible non-palindromic
  generator polynomials as required in Lemma~\ref{lemma:cyclic-one}.
  For $d_{\cal C}=3$ the sum in Eq.~(\ref{eq:gv-cyclic-one}) has just
  one term at $s=2$; for $n$ odd $\gcd(n,2)=1$.  Explicit calculation
  confirms that the parameters of these codes satisfy inequality
  (\ref{eq:gv-cyclic-one}), which proves the existence of
  single-generator cyclic quantum codes with exactly the same parameters,
  $[[n=2^{4h}+2^{3h}-2^h-1,k=n-6h,3]]$, but not necessarily cyclic CWS codes.
  The smallest of these codes, $[[21,15,3]]$, corresponds to a
  polynomial $q(x)=1+x+x^2+x^4+x^6$ (unique up to a reversal) and can
  be obtained from an order-$21$ circulant graph corresponding to
  $r(x)=x+x^4+x^{17}+x^{20}$.  This particular combination of
  parameters gives the best existing
  code\cite{Grassl:codetables}.
\end{example}

\begin{example}
  According to the BCH bound \cite{MS-book}, a cyclic code has
  distance $d\ge r+1$ ($r+1$ is the ``{\it designed\/}'' distance) if
  the corresponding generator polynomial $q(x)$ has $r$ consecutive
  roots, e.g., $\alpha, \alpha^2, \ldots, \alpha^r$, where $\alpha$ is
  the primitive $n\,$th root of unity. A polynomial $m_\alpha(x)$
  which has root $\alpha$, necessarily has $s$ distinct roots
  $\alpha^{2^j}$ for all $j=0,...,s-1$ if $s$ is the smallest number
  such that $2^s=1 \mod n$. We say that the code has zeros
  $\alpha^{i}$, where exponents $i$ form the set $I=\{2^j (\mod n),\;
  j=0,...,s-1\}.$ The code generated by $m_\alpha(x)$ has designed
  distance $5$ if $3\in I$ or, equivalently, if $2^s=3\mod n$ for some
  $s$. The polynomial $m_\alpha(x)$ is non-palindromic if $-1\not\in
  I$. We can further obtain codes with irreducible non-palindromic
  generators and designed distances $7$, $9$, etc., by imposing
  additional conditions, e.g., $5\in I$, $7\in I$, etc. The values of
  $n,$ for which this is possible form an infinite set, $\{23_5, 47_5,
  71_7, 95_5, 115_5, 143_5, 167_5, 191_7,235_5,239_7,\ldots\}$, where
  the subscripts indicate the designed distances.  In fact, the first
  three codes represent the well known quadratic-residue codes with
  the higher distances (exceeding the BCH bound) equal to 7, 11, and
  11, respectively.  GV bound proves the existence of additive quantum
  CWS codes with the parameters $[[23,12,4]]$, $[[47,24,d\ge 6]]$,
  $[[71,36,d\ge 7]]$, $[[95,59,5]]$, $[[115,71,5]]$, $[[143,83,11]]$,
  etc.  The first three codes have the parameters as good as any known
  codes with such $n$ and $k$.
\end{example}

Now let us consider the case of a palindromic polynomial $q(x)$.  First,
we prove
\begin{statement}
  \label{lemma:symmetric-q-cws}
  Consider a binary code ${\cal C}$ generated by a palindromic
  polynomial $q(x)$
  such that $q(1)=1$.  Then any quantum
  code~(\ref{eq:GF4-polynomial}) which satisfies self-orthogonality
  condition~(\ref{eq:cyclic-orthogonality}) is equivalent to a cyclic
  CWS code with a symmetric polynomial $r(x)$, see
  Eq.~(\ref{eq:symmetric-polynomial}).
\end{statement}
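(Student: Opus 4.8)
The plan is to work in the ring $A\equiv\mathbb{F}_2[x]/(x^n-1)$ and to track the reversal map $\widehat{f}(x)\equiv f(x^{n-1})$, so that the symmetry condition~(\ref{eq:symmetric-polynomial}) reads $\widehat{r}=r$. The palindromic hypothesis~(\ref{eq:palindromic}) is the reciprocity identity $x^{\deg q}q(1/x)=q(x)$; since polynomial reciprocation is multiplicative (every factor of $x^n-1$ has nonzero constant term) and $x^n-1$ is itself self-reciprocal, the factorization $x^n-1=p(x)q(x)$ forces $p(x)$ to be self-reciprocal as well. In the ring this means $\widehat{p}=x^{n-\deg p}\,p$, a unit multiple of $p$. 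The self-orthogonality condition~(\ref{eq:cyclic-orthogonality}), which in $A$ reads $p\,\widehat{p}\,(\widehat{r}-r)=0$, therefore collapses to
\[
  p(x)^2\,[\widehat{r}(x)-r(x)]=0 \pmod{x^n-1},
\]
that is, $q(x)\mid p(x)\,[\widehat{r}(x)-r(x)]$.

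Next I would pass to the factor $q$. Assuming $x^n-1$ is squarefree (odd $n$), the polynomials $p$ and $q$ are coprime, so the last divisibility cancels the factor $p$ and yields $\widehat{r}\equiv r \pmod{q}$: the residue class of $r$ modulo $q$ is fixed by the reversal, which descends to $\mathbb{F}_2[x]/(q)$ precisely because $q$ is palindromic. It then remains to lift this reversal-fixed class to a \emph{genuinely} symmetric polynomial $r'$ with $r'\equiv r\pmod q$. I would use the idempotent (Chinese-remainder) splitting $A=qA\oplus pA$, which the reversal respects because both $q$ and $p$ are palindromic, and simply take $r'$ to be the $pA$-component of $r$ (equivalently, zero out the components supported on the factors of $q$). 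Every element of $qA$ is divisible by $q$, so $r-r'\in qA$ gives $r'\equiv r\pmod q$; and $r'\in pA$ inherits the reversal-invariance established above, so $\widehat{r'}=r'$. By~(\ref{eq:circulant-r-equivalence}) the code~(\ref{eq:GF4-polynomial}) built from $r'$ is equivalent to the given one.

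The last step, and the place where the hypothesis $q(1)=1$ is essential, is to realize $r'$ as an honest simple graph, i.e.\ to force a vanishing constant term $r'_0=0$ (no self-loops on the diagonal of the circulant $R$). The remaining freedom is to add to $r'$ any reversal-fixed element of the ideal $qA$. Here the all-ones polynomial $j(x)=1+x+\cdots+x^{n-1}$ is manifestly symmetric with $j_0=1$, and $q(1)=1$ is exactly the statement that $(x+1)\mid p$, whence $j=[p/(x+1)]\,q\in qA$. Thus if the $r'$ produced above has $r'_0=1$, replacing $r'\to r'+j$ keeps it symmetric and in the same class modulo $q$ while flipping the constant term to $0$. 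The resulting $r'$ satisfies~(\ref{eq:symmetric-polynomial}) with zero diagonal, so it defines a cyclic CWS code equivalent to the original, as claimed. I expect the main obstacle to be the coprimality step used to strip the factor $p$: it quietly requires $\gcd(p,q)=1$, i.e.\ that $x^n-1$ be squarefree, whereas the palindromic bookkeeping and the constant-term adjustment are routine once that cancellation is justified.
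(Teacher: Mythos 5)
Your argument is correct and, as you anticipated, it rests on exactly the same hidden hypothesis as the paper's own proof --- namely $\gcd(p,q)=1$, which the paper also needs (silently) in order to pass from $q\mid p\,[\widehat r - r]$ to $\widehat r\equiv r$ modulo $q$ --- so flagging it explicitly is a virtue rather than a defect. Where you genuinely diverge is in upgrading that congruence to an honestly symmetric representative. The paper does this by hand: using the equivalence~(\ref{eq:circulant-r-equivalence}) it zeroes the coefficients $r_{m+1},\dots,r_{n-m}$ (with $\deg q = 2m$), observes that $r(x)+r(x^{n-1})$ is then supported on a cyclic window of length $2m+1$ and has vanishing constant term, so that the only multiples of $q(x)$ it could equal are $0$ and $x^{n-m}q(x)$, and excludes the latter because $q(1)=1$ forces the central coefficient $q_m=1$. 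You instead project onto the $pA$ summand of the reversal-invariant splitting $A=qA\oplus pA$, which produces the symmetric representative in one stroke with no case analysis; the price is that the representative is less explicit (one must compute the idempotents or a B\'ezout identity), whereas the paper's normalization hands you a concrete reduced form directly. A real bonus of your version is the final step: condition~(\ref{eq:symmetric-polynomial}) by itself does not force $r_0=0$, which is needed for $R$ to be a loop-free adjacency matrix, and you repair this by adding the all-ones polynomial $j(x)\in qA$ --- this is where you spend the hypothesis $q(1)=1$, while the paper spends it on $q_m=1$ and leaves the diagonal issue unaddressed.
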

\begin{proof}
  The corresponding check polynomial $p(x)$ is symmetric, thus the
  condition~(\ref{eq:cyclic-orthogonality}) can be rewritten as
  \begin{equation}
    \label{eq:cyclic-orthogonality-symmetric}
    r(x)+r(x^{n-1})= 0\mod q(x).
  \end{equation}
  The condition $q(1)=1$ guarantees that the palindromic polynomial
  $q(x)$ has odd weight and even degree $2m$, in which case the
  ``central'' monomial $x^m$ has non-zero coefficient $q_m=1$.  Given
  the block length $n$, let us choose an equivalent code [see
  Eq.~(\ref{eq:circulant-r-equivalence})] with $r(x)$ such that the
  coefficients
  \begin{equation}
    r_{m+1}=r_{m+2}=\ldots =r_{n-m}=0.
    \label{eq:zero-set}
  \end{equation}
The coefficients of the polynomial in the l.h.s.\ of
Eq.~(\ref{eq:cyclic-orthogonality-symmetric}) satisfy the same
condition, except for the term $x^{n-m}$, which has coefficient
$r_m$.  The coefficients are arranged in such a way that the
l.h.s.\ of Eq.~(\ref{eq:cyclic-orthogonality-symmetric}) can only
equal zero or $x^{n-m} q(x)\mod x^n-1$.  However, the latter
possibility can be excluded by comparing the corresponding
coefficients  of the free term $x^0$.  The only remaining case
corresponds to a symmetric $r(x)$.
\end{proof}

It is now clear that for a palindromic irreducible generator
polynomial $q(x)\neq 1+x$, one should reduce the count in the r.h.s.\
of Eq.~(\ref{eq:gv-cyclic-one}) by replacing $2^{n-k}$ with
$2^{(n-k)/2}$, the number of symmetric polynomials that satisfy
Eq.~(\ref{eq:zero-set}).  This gives the following version of GV bound,
\begin{equation}
d_{\mathrm{GV}}=\max d:
\sum_{s=1}^{d-1}\left(3^s-3\right){\gcd(s,n)\over n}{n\choose s}\le
2^{(n-k)/2}-2.\label{eq:gv-cyclic-three}
\end{equation}
While the resulting estimate is much weaker than the GV
bound~(\ref{eq:gv-cyclic-one}), it still gives asymptotically good
codes.  A better (especially, for small $d$) bound is given in the
following
\begin{statement}
  Consider a cyclic binary code ${\cal C}[n,k,d_{\cal C}]$ with
  $d_{\cal C}\ge3$ and the generator polynomial $q(x)$ which is both
  palindromic and irreducible.  Then there exists a cyclic CWS code
  $[[n,k,d]]$ with the distance $d=\min(d_{\cal C},\lfloor
  d_\mathrm{GV}/2\rfloor)$, where $d_\mathrm{GV}=\max d$:
  \begin{equation}
    \label{eq:gv-cyclic-two}
\sum_{s=1}^{d-1}(3^{\lceil s/2\rceil}-3){\gcd(s,n)\over n}
{\lfloor n/2\rfloor\choose \lfloor s/2\rfloor}\le 2^{
(n-k)/2}-2.
\end{equation}
\label{lemma:GV-cyclic}
\end{statement}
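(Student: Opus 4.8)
The plan is to mirror the counting argument of Lemma~\ref{lemma:cyclic-one}, but now accounting for the fact that, by Lemma~\ref{lemma:symmetric-q-cws}, every admissible self-orthogonal code corresponding to a palindromic irreducible $q(x)$ can be taken to have a \emph{symmetric} polynomial $r(x)$ satisfying the zero-set condition~(\ref{eq:zero-set}). First I would recall from Eq.~(\ref{eq:poly-orthogonality-two}) that each error vector $\mathbf{e}=\mathbf{u}+\omega\mathbf{v}$ with $v(x)$ prime to the irreducible $q(x)$ determines a \emph{unique} candidate $r(x)\bmod q(x)$. The new ingredient is that only symmetric $r(x)$ yield valid codes, and there are exactly $2^{(n-k)/2}$ of them (the count of free coefficients under the reflection symmetry combined with~(\ref{eq:zero-set})), which explains the right-hand side $2^{(n-k)/2}-2$ in~(\ref{eq:gv-cyclic-two}); the two discarded cases are again $r(x)=0$ and $r(x)=1\bmod q(x)$, arising from $u(x)=0$ and $u(x)=v(x)$ respectively.

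The crux is the improved left-hand side: the binomial $\binom{\lfloor n/2\rfloor}{\lfloor s/2\rfloor}$ and the reduced power $3^{\lceil s/2\rceil}$ replace the cruder $\binom{n}{s}$ and $3^s$ of~(\ref{eq:gv-cyclic-one}). I would obtain these by observing that a \emph{symmetric} solution $r(x)$ forces us to consider the error vector $\mathbf{e}$ together with its reflection. The key step is to pair each $\mathbf{e}$ with its reflected partner $\mathbf{e}(x^{n-1})$: because $r(x)$ is symmetric, the orthogonality condition~(\ref{eq:poly-orthogonality-two}) is invariant under simultaneous reflection, so the reflected vector produces the \emph{same} $r(x)$ and need not be counted separately. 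Counting only one representative per reflection-orbit effectively halves the relevant degrees of freedom: the weight-$s$ support can be specified by roughly $\lfloor s/2\rfloor$ positions in a half-length block of size $\lfloor n/2\rfloor$, yielding the factor $\binom{\lfloor n/2\rfloor}{\lfloor s/2\rfloor}$, while the $\mathbb{F}_4$-labels on the $\lceil s/2\rceil$ independent positions give $3^{\lceil s/2\rceil}$ choices (the ceiling accounts for a possible self-paired central position). The $\gcd(s,n)/n$ factor is retained exactly as in Lemma~\ref{lemma:cyclic-one} to discount cyclic shifts, which also preserve $r(x)$.

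With these counts in hand, the greedy inequality~(\ref{eq:greedy-full}) reads: if the number of reflection-and-shift orbits of bad vectors of weight $\le d-1$, namely the sum in~(\ref{eq:gv-cyclic-two}), is strictly less than the number $2^{(n-k)/2}-2$ of available nontrivial symmetric $r(x)$, then some symmetric $r(x)$ avoids all of them and hence defines a distance-$d$ code. The factor-of-two loss in passing from $d_\mathrm{GV}$ to $\lfloor d_\mathrm{GV}/2\rfloor$ in the distance claim arises because the reflection-pairing argument controls vectors up to their reflected partners, so the guaranteed distance must be halved to ensure that \emph{both} a vector and its reflection are excluded; the hypothesis $d_{\cal C}\ge 3$ ensures $q(x)\neq 1+x$ so that Lemma~\ref{lemma:symmetric-q-cws} applies cleanly. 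The bound $d\le d_{\cal C}$ is again inherited from Lemma~\ref{lemma:upper-binary}.

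The main obstacle I anticipate is making the reflection-orbit counting rigorous, in particular correctly handling self-paired (reflection-fixed) vectors and the interaction between the reflection symmetry and the cyclic-shift symmetry already quotiented by the $\gcd(s,n)/n$ factor. One must verify that these two symmetries combine to give precisely the stated $\lfloor\cdot\rfloor$ and $\lceil\cdot\rceil$ expressions without double-discounting, and that the replacement of $2^{n-k}$ by $2^{(n-k)/2}$ is exactly compensated; getting the parity bookkeeping of $s$ (even versus odd weight) correct is where the delicate accounting lies.
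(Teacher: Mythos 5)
Your setup is right---the hypothesis $d_{\cal C}\ge 3$ rules out $q(x)=1+x$, Lemma~\ref{lemma:symmetric-q-cws} restricts to symmetric $r(x)$, and the count $2^{(n-k)/2}$ of such polynomials explains the right-hand side of (\ref{eq:gv-cyclic-two})---but the mechanism you give for the left-hand side is not the one that produces the stated formula, and it would not work. You propose to count one representative per \emph{reflection orbit} of general weight-$s$ vectors $\mathbf{e}$. Since a reflection orbit has at most two elements, this quotient only saves a factor of $2$, giving roughly $\tfrac12\,3^{s}\binom{n}{s}$ classes, which is exponentially larger than the claimed $3^{\lceil s/2\rceil}\binom{\lfloor n/2\rfloor}{\lfloor s/2\rfloor}$. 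The latter expression is not an orbit count at all: it is the number of reflection-\emph{fixed} (symmetric) errors, $e(x^{n-1})=e(x)\ \mod x^n-1$, whose support consists of $\lfloor s/2\rfloor$ mirror pairs chosen from $\lfloor n/2\rfloor$ available pairs, each pair carrying one of $3$ common labels, with one extra label for a self-paired central position when $s$ is odd. The paper's inequality counts \emph{only} these symmetric errors, not representatives of all errors.

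Once you restrict the count to symmetric errors, you need a separate argument to control the distance against \emph{general} errors, and this is the key idea missing from your proposal: for any error $e(x)\in C_\perp$ the symmetrization $e_{\mathrm{sym}}(x)\equiv e(x)+e(x^{n-1})\ \mod x^n-1$ is again in $C_\perp$ (because $r(x)$ is symmetric), is itself symmetric, and satisfies $\wgt e_{\mathrm{sym}}\le 2\,\wgt e$. Hence excluding all symmetric low-weight vectors from $C_\perp$ excludes all general vectors of at most half that weight, which is the actual origin of the $\lfloor d_{\mathrm{GV}}/2\rfloor$ in the distance claim. Your stated reason for the halving---that one must exclude ``both a vector and its reflection''---does not explain any loss of distance, since a vector and its reflection have the same weight. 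Without the symmetrization map your argument either fails to justify the left-hand side of (\ref{eq:gv-cyclic-two}) or fails to connect the symmetric-error count to the code distance; the condition $d\le d_{\cal C}$ from Lemma~\ref{lemma:upper-binary} is the only part of the distance claim your outline actually establishes.
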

\begin{proof}
  The restriction on the distance guarantees that $q(x)\neq 1+x$, and
  therefore $q(x)$ satisfies the conditions of
  Lemma~\ref{lemma:symmetric-q-cws}; in particular, $n-k$ is even.
  The inequality just corresponds to symmetric polynomials $r(x)$ and
  the errors that are also symmetric, $e(x^{n-1})=e(x)\mod x^n-1$.
  The statement of the Lemma follows from the fact that for any
  general error $e(x)$, there is a symmetric error
  $e_\mathrm{sym}(x)\equiv e(x)+e(x^{n-1})\mod x^n-1$ whose weight is
  even and is limited by $\wgt e_\mathrm{sym}(x)\le 2 \wgt e(x)$.
\end{proof}



\begin{example} Among classical codes, the largest distance is obtained for
  the repetition codes, with the parameters ${\cal C}=[n,1,n]$.  The
  parity-check polynomial is $p(x)=x+1$; the generator polynomial
  $q(x)=1+x+\ldots+x^{n-1}$ is irreducible and palindromic for $n=2$, and for
  all $n>2$ that satisfy the condition $\ord_m(2)=m-1$, where $\ord_m(q)$ is
  the multiplicative order of $q$ modulo $m$.  This includes the following
  $n\le 100$:
  \begin{equation}
     \{3, 5, 11, 13, 19, 29, 37, 53, 59, 61, 67, 83,
    \ldots\}.
    \label{eq:prime-set}
\end{equation}
Lemma~\ref{lemma:GV-cyclic} shows that for $n$ from the
set~(\ref{eq:prime-set}), additive cyclic CWS codes with parameters
$[[n,1,\lfloor d_{GV}(n,1)/2\rfloor]]$ exist, where $d_{GV}(n,1)$ is
obtained from Eq.~(\ref{eq:gv-cyclic-two}) with $k=1$. Asymptotically,
at large $n$, this corresponds to cyclic codes with the relative
distance given by half of that given by Eq.~(\ref{eq:GV-zero-rate}).
\end{example}

\begin{example}
  (Cyclic analogs of the toric code) In the setting of the previous example,
  cyclic CWS codes $[[5,1,3]]$, $[[13,1,5]]$, $[[25,1,7]]$, $[[41,1,9]]$ with
  $p(x)=1+x$ were obtained numerically.  The corresponding graph-state
  generators are $ZXZ$ for $d=3$, $ZZXZZ$ for $d=5$, etc.  We obtain a family
  of cyclic codes with the weight-$4$ generators, $S_3=ZXXZ$, $S_5=ZIXXIZ$,
  etc.  Codes with generators $S_3=ZXIXZ$, $S_5=ZXIIIXZ$, $S_7=ZXIIIIIXZ$, and
  $S_9=ZXIIIIIIIXZ$ have the same parameters (the corresponding graphs are
  somewhat more complicated).  The latter family can be generalized to codes
  with $n=t^2+(t+1)^2$, $k=1$, $d=2t+1$, $t=1,2,\ldots$; the corresponding
  stabilizer generators $S_{2t+1}$ having $2t-1$ identity operators separating
  $ZX$ and $XZ$.  These cyclic codes are related to generalized toric
  codes\cite{Bravyi-1998,Freedman-1998,Bombin-2007}; the square-lattice qubit
  layout preserving the circulant symmetry is illustrated in
  Fig.~\ref{fig:skewed} for $t=1,2$.
  \label{ex:toric-cws}
\end{example}

\begin{figure}[htbp]
  \centering
  \includegraphics[width=0.36\columnwidth]{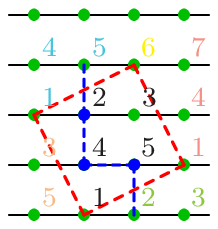}\
  \includegraphics[width=0.54\columnwidth]{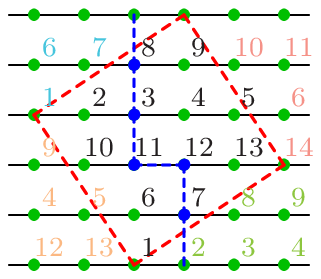}
  \caption{(Color online) Left: Correspondence between the cyclic code
    $[[5,1,3]]$ with generators $ZXIXZ$, and a generalized toric code
    on square lattice.  Only qubits within the dashed square are
    participating in the code; periodic boundary conditions are
    assumed.  The vertical dashed line indicates a topologically
    nontrivial chain of errors which limits the distance of the code:
    $X_2Z_4Z_5$ is equivalent to $Z_1Z_2Z_3Z_4Z_5$.  Right: same for
    the code $[[13,1,5]]$ with the generator $ZXIIIXZ$.}
  \label{fig:skewed}
\end{figure}

\begin{example} (CWS codes from $k$ copies of the repetition code) Let
  us take the binary code ${\cal C}$ to be formed by $k$ copies of the
  repetition code with the distance $d_2=m$.  Then the block size is
  $n=km$, and the check polynomial is $p(x)=x^k-1$.  The generator
  polynomial $q(x)=1+x^k+\ldots+x^{k(m-1)}$ is always palindromic; it
  is also irreducible if $m$ belongs to the set (\ref{eq:prime-set}),
  while $k=m^s$, $s=0,1, 2, \ldots$.  For sufficiently large $n$,
  Lemma~\ref{lemma:GV-cyclic} gives asymptotically good codes with
  $kd\propto n^2$.  Since these parameters cannot exceed those of the
  binary code ${\cal C}$ which correspond to $kd=n$ (thus
  $\delta=1/k$), for these values of $m$ and $k> 10$, there exist
  cyclic CWS codes with the parameters of the corresponding binary
  code, $[[n=m^{s+1},m^s,d=m]]$.  This prediction is readily verified
  empirically, see Table~\ref{tab:quantum-repetition}.  Note that, as
  in the Example \ref{ex:toric-cws}, many of these codes have
  stabilizer generators with small weight.
    \label{ex:k-copies}
\end{example}

\begin{table}[htbp]
  \centering
  \begin{tabular}[c]{c|c|c|c|c|c|c}
 $m$&$n$&$k$&$d$&$S_1$\\
 \hline
 3 &  6 & 2 & 2 &  $ZXZ$ \\
  &  9 & 3 & 3 &  $ZXZ$\\
  &  12 & 3 & 3 &  $ZXZ$\\  \hline
  5 &   5 & 1 & 3 &  $ZXZ$ \\
  & 10 & 2 & 3 &  $ZXZ$\\
  & 15 & 3 & 5 &  $ZIZIIXIIZIZ$\\
  & 20 & 4 & 5 &  $ZIZZXZZIZ$\\
  & 25 & 5 & 5 &  $ZIZZXZZIZ$\\ \hline
7  & 7  & 1 & 3 &  $ZXZ$\\
  & 14 & 2 & 5 &  $ZZIXIZZ$ \\
  & 21 & 3 & 6 &  $ZIZZXZZIZ$ \\
  & 28 & 4 & 7 &  $ZIZZIXIZZIZ$\\ \hline
  \end{tabular}
  \caption{Families of the cyclic codes obtained numerically from $k$ copies 
    of the
    classical repetition code, $p(x)=x^m-1$, corresponding to
    $m=3,5,7$.   The expected distance saturation,
    $d=m$, is reached already at  $k\le4$, even for $k$ and $m$ different
    from those in Example~\ref{ex:k-copies}.
    The operator strings in the last column are
    representative graph-state generators (the
    remaining generators are obtained by cyclic shifts).}
  \label{tab:quantum-repetition}
\end{table}

\section{Conclusions}
\label{sec:conclusions}

In this paper, we analyze how  the general CWS framework can
facilitate the search of the additive quantum codes with
reasonably good parameters. Unlike  complete optimization of CWS
codes\cite{Chuang-CWS-2009}, which involves all non-isomorphic
LC-inequivalent order-$n$ graphs and all binary codes of length
$n$, here one can independently pick a suitable graph ${\cal G}$,
and then search for a linear binary code ${\cal C}$ that can
correct the error patterns induced by ${\cal G}$, see
Eq.~(\ref{Orthogonality}).

The choice of the graph is discussed in
Sec.~\ref{sec:upper-bound}. In the simplest case of pure codes,
one has to pick a graph with a sufficiently large graph-state
distance $d'({\cal G})$.  Assuming that a regular graph is being
sought in this design, we consider graphs with minimal vertex
degrees $d'({\cal G})-1$ or more.

After the graph is chosen, the second step involves the search of
an appropriate binary code. Here in Sec.~\ref{sec:gv} we prove the
existence of the binary codes that give good quantum CWS codes.
The corresponding lower bound on the distance is given by the
quantum Gilbert-Varshamov bound~(\ref{eq:SumP}). Note that while
this bound is proved for a given graph, it is the same bound that
holds for a generic stabilizer code.

Our results show that  by restricting the graph ${\cal G}$ of a
CWS code to regular lattices, one can lower the complexity of the
code search and still obtain codes with relatively good
parameters. On the other hand, the graph structure could be mapped
directly to a physical qubit layout. Therefore, such codes can
simplify both the hardware design and the error-correcting
procedures, which will easily admit the property of translational
invariance.

An unexpected byproduct of this work is the discovery of a previously
unexplored family of single-generator quantum cyclic codes
(Sec.~\ref{sec:one-gen}).  These codes are relatively easy to
construct, and they are plentiful.  We construct (or prove the
existence) of several simple families of such codes that have
unbounded distances.  These include cyclic CWS codes with weight-$4$
stabilizer generators, which turned out to be toric codes in disguise
(Example \ref{ex:toric-cws}), as well as a code family with the
parameters of generalized repetition codes, $[[kd,k,d]]$ (Example
\ref{ex:k-copies}).  The main advantage of these families is a simple
structure of their stabilizers.

\section{Acknowledgments}
We are grateful to M.~Grassl for helpful comments.  This work was supported in
part by the Army Research Office 
through the grant No.\ W911NF-11-1-0027 (LP \& ID), and by NSF through
the grant No.\ 1018935 (LP).

%

\end{document}